\documentclass[lettersize,journal,onecolumn]{IEEEtran}
\usepackage{algorithmic}
\usepackage{array}
\usepackage[caption=false,font=normalsize,labelfont=sf,textfont=sf]{subfig}
\usepackage{textcomp}
\usepackage{stfloats}
\usepackage{url}
\usepackage{verbatim}
\usepackage{graphicx}
\hyphenation{op-tical net-works semi-conduc-tor IEEE-Xplore}
\def\BibTeX{{\rm B\kern-.05em{\sc i\kern-.025em b}\kern-.08em
    T\kern-.1667em\lower.7ex\hbox{E}\kern-.125emX}}
\usepackage{balance}

\usepackage{amsfonts, amsmath, amsthm, amssymb,latexsym}
\usepackage{epsfig}
\usepackage[curve]{xy}
\usepackage{algorithm}
\usepackage{enumerate}
\usepackage{framed}
\usepackage{hyperref}
\usepackage{mathtools}
\usepackage{lmodern}
\usepackage[T1]{fontenc}

\usepackage{chngcntr}

\usepackage{tabularx,colortbl}
\usepackage{float}


\usepackage{etoolbox}
\usepackage{xcolor}

\usepackage{bm}


\newtheorem{thm}{Theorem}[section]
\newtheorem{cor}[thm]{Corollary}
\newtheorem{lem}[thm]{Lemma}
\newtheorem{prop}[thm]{Proposition}

\theoremstyle{definition}
\newtheorem{defn}[thm]{Definition}

\theoremstyle{remark}

\numberwithin{equation}{section}


\newcommand{\rad}{\operatorname{rad}}

\begin{document}
\title{Fast Polynomial Arithmetic in Homomorphic Encryption with Cyclo-Multiquadratic Fields}
\author{I. Blanco-Chacón, A. Pedrouzo-Ulloa \IEEEmembership{Member, IEEE}, R. Y. Njah Nchiwo and B. Barbero-Lucas
\thanks{I. Blanco-Chac\'on is with the Departament of Physics and Mathematics, University of Alcal\'a, Spain, and with the Department of Mathematics and Systems Analysis, Aalto University, Finland (e-mail: ivan.blancoc@uah.es).}
\thanks{A. Pedrouzo-Ulloa is with the atlanTTic Research Center, Universidade de Vigo, Spain (e-mail: apedrouzo@gts.uvigo.es).}
\thanks{R. Y. Njah Nchiwo is with the Department of Mathematics and Systems Analysis, Aalto University, Finland (e-mail: rahinatou.njah@aalto.fi).}
\thanks{B. Barbero-Lucas is with the School of Mathematics and Statistics. University College Dublin, Ireland (e-mail: Beatriz.Barberolucas@ucdconnect.ie).}
}


\maketitle

\begin{abstract}
We discuss the advantages and limitations of cyclotomic fields to have fast polynomial arithmetic within homomorphic encryption, and show how these limitations can be overcome by replacing cyclotomic fields by a family that we refer to as cyclo-multiquadratic. This family is of particular interest due to its arithmetic efficiency properties and to the fact that the Polynomial Learning with Errors (PLWE) and Ring Learning with Errors (RLWE) problems are equivalent for it. Likewise, we provide exact expressions for the condition number for any cyclotomic field, but under what we call the twisted power basis. As a tool for our result, we obtain refined polynomial upper bounds for the condition number of cyclotomic fields with up to 6 different primes dividing the conductor. From a more practical side, we also show that for this family, swapping between NTT and coefficient representations can be achieved at least twice faster than for the usual cyclotomic family.
\end{abstract}

\begin{IEEEkeywords}
Ring Learning with Errors, Polynomial Learning with Errors, Condition Number, Cyclotomic polynomials, Homomorphic Encryption, Number Theoretic Transforms.
\end{IEEEkeywords}

\section{Introduction and Motivation}
\IEEEPARstart{L}{attices} have become a fundamental tool for the construction of modern and efficient cryptographic primitives. Notably, they bring about several relevant properties; firstly, from a theoretical perspective, lattice-based cryptographic primitives admit quantum polynomial reductions from worst-case to average-case supposedly hard lattice problems, which typically correspond to approximating within polynomial factors the Shortest Vector Problem (SVP) or Closest Vector Problem (CVP) over general lattices, or even over the more structured class of ideal lattices. Despite the fact that the precise theoretical hardness of all these worst-case assumptions is not well established yet, confidence on its difficulty has been gained during the last years by the fact that there are already numerous works studying their concrete bit security~\cite{APS15}, and there are also related theoretical results proving that SVP for general lattices and with small approximation factors is NP-hard~\cite{Micciancio00,Khot05}. Secondly, lattice-based primitives are easier to implement and require, in general, much smaller key sizes than other post-quantum proposals.

At this point, it is worth mentioning that comparing to more traditional quantum-vulnerable cryptographic assumptions (\emph{e.g.} hardness of integer factorisation for RSA and the discrete logarithm problem for Diffie-Hellman), lattice-based primitives introduce a non-negligible size overhead on both the encrypted data and the keys. Even so, in return they are usually simple, efficient and highly parallelizable, while also comparing favourably with the use of post-quantum assumptions. Actually, the success caused by their benefits is confirmed by the fact that, out of the four proposals selected in the NIST Post-Quantum Cryptography Standardization Process, three are lattice-based (see \cite{PQNIST}). Moreover, this category has been keeping the largest number of surviving candidates along all the previous rounds. For instance, in the third round, 5 out of 7 finalists were based on structured lattice assumptions; being also the only hardness assumption keeping surviving representatives for  digital signatures, Public-Key Cryptography (PKE) and Key Encapsulation Mechanisms (KEM).
    
Finally, not only they appear as a strong substitute for conventional cryptographic primitives, but also they have shown to be very flexible, having been used to construct a wide variety of new exciting applications, \emph{e.g.} Fully Homomorphic Encryption (FHE), Functional Encryption (FE), Attribute-based Encryption (ABE), etc. In particular, if we pay attention to the state-of-the-art of FHE, lattice-related assumptions are nowadays the main building block backing up its security; \emph{e.g.}, in the Homomorphic Encryption (HE) standardization process all included designs rely on the use of lattices. (see \cite{HEORG}).
    
\subsection{The family of Learning with Errors and its equivalence between variants} 
While the aim in lattice-based cryptography is to ground security in the hardness of the previously mentioned worst-case lattice problems, alternative average-case assumptions are often considered to build cryptographic primitives. In this case, the objective is to make use of the assumptions which better fit the needs of practical cryptographic constructions. Among them, the most prominent example is the Learning with Errors problem (LWE ~\cite{Regev09}). It has become the preferred one due to its versatility and strong security guarantees by possessing a reduction from approximate SVP over general lattices. However, applications based on LWE present a quadratic overhead with respect to the considered security parameter~\cite{LPR13}.
    
As a means to effectively address this limitation, Lyubashevsky \emph{et al.}~\cite{LPR13} introduced a variant called Ring Learning with Errors (RLWE) which, contrarily to LWE, is based on the hardness of worst-case problems over ideal lattices. RLWE has proven to be more practical than LWE, removing its quadratic overhead and, consequently, enabling a noteworthy reduction in the size of public and secret keys. Alternatively, the Module-LWE problem (MLWE ~\cite{BGV14,LS15}) was introduced as a bridge between LWE and RLWE, enabling for more (resp. less) efficient constructions than LWE (resp. RLWE), but having a reduction from problems over less structured lattices (\emph{i.e.} module lattices) than ideal lattices.
    
In general, the wide variety of structured and unstructured lattice assumptions, together with their dependency on many interrelated parameters, makes the analysis of the concrete security of lattice constructions a very relevant topic (see ~\cite{BBS21} and ~\cite{APS15}).
    
\subsection{PLWE and conditions for polynomial equivalence with RLWE}

While the RLWE and MLWE problems are formulated in terms of the ring of integers $\mathcal{O}_K$ of an algebraic number field $K$,\footnote{We refer here equally to both primal or dual RLWE versions, in which the error distribution is defined, respectively, on the canonical embedding over the ring of integers $\mathcal{O}_K$ or its dual $\mathcal{O}_K^\vee$. Both versions are equivalent, as proved in \cite{RSW18}} the use of more concrete ring structures is usually more suitable for cryptographic implementations. In particular, a very convenient choice supporting efficient arithmetic is the case of quotient rings of polynomials as $\mathbb{Z}[x]/(f(x))$, where $f(x)$ is a monic irreducible polynomial. This particularization of RLWE to polynomial quotient rings is usually referred to as Polynomial Learning with Errors (PLWE ~\cite{SSTX09,BV11}).
    
A natural and important question which arises with PLWE is to understand under which conditions it is equivalent to RLWE. For those PLWE instantiations where there is an affirmative answer for this equivalence, RLWE hardness results straightforwardly apply to the corresponding PLWE-based implementation. Specifically, this notion of RLWE--PLWE equivalence~\cite{RSW18} requires the existence of an algorithm which transforms admissible RLWE-samples into admissible PLWE-samples and vice-versa, with a polynomial complexity in the degree of the underlying number field~\cite{BL21arxiv}. Admissible refers here to the fact that this algorithm must cause a distortion into the error distribution which is, at most, also polynomial in the degree of the underlying number field.
    
Although it is known~\cite{DD12} that this equivalence holds for the widely used case of PLWE under $\mathbb{Z}[x]/(x^m+1)$ (with $m$ a power-of-two) and RLWE under power-of-two cyclotomic fields,\footnote{The transformation between RLWE and PLWE samples is a scaled isometry for power-of-two cyclotomic number fields.} it has been recently shown that the same relation does not hold in general for cyclotomic number fields~\cite{SSS22}. Additionally, a series of works~\cite{DD12,RSW18,Bolboceanu18,blanco1,blanco2} have explored in detail this relation for different types of number fields and quotient polynomial rings: (1) In~\cite{RSW18} the authors show their equivalence for an ad hoc family of polynomials, (2) for the cyclotomic scenario there are some positive results showing the equivalence if the number of distinct primes dividing the conductor is kept uniformly bounded~\cite{blanco1}, and finally, (3) there are also positive results for a family of finite abelian $\mathbb{Q}$-extensions~\cite{blanco2,BL21arxiv}.
    
Consequently, \emph{a better understanding of the required conditions for the equivalence between PLWE and RLWE} is not only an interesting research topic by itself, but also turns out to be fundamental to provide a wider catalogue of PLWE instantiations for the designers of cryptographic implementations. This corresponds to the first objective of this work. 

Our second goal addresses the speed of computations in the homomorphic encryption setting. In particular, we discuss to what extent the broadly used Residue Number System representation (definition given in the next section) interferes with the RLWE-PLWE equivalence for most of families of number fields used to back homomorphic encryption primitives. As a way to overcome this tradeoff, we propose the use of a new family of number fields which we have baptised as \emph{cyclo-multiquadratic}.

\subsection{Our contributions} First, we give refined polynomial upper bounds for the condition number of the Vandermonde matrix corresponding to the RLWE-to-PLWE transformation for cyclotomic number fields with up to 6 primes dividing the conductor. These bounds are much sharper than the general one given in \cite[Thm. 3.10]{blanco1} and extend the results of Section 4 therein. The proof of these bounds has been postponed to the appendix, to ease the reading of our work.

Second, in Thm. 3.16 we give an exact formula for the condition number of the RLWE-to-PLWE transformation for any cyclotomic number field, but where the usual power basis is replaced by the twisted power basis, and justify why this basis is preferable to the usual one in homomorphic encryption applications. Furthermore, we compare the condition number for different cyclotomic fields with our predicted bounds. We consider conductors up to $10^6$, divisible by up to $6$ different primes and with general conductors of that magnitude (Fig. 2).

Third, we introduce cyclo-multiquadratic number fields and justify why they are interesting as a tool to grant RLWE/PLWE equivalence, while also providing arithmetic efficiency when applying the Residue Number System representation. In particular, we prove in Prop. 4.4 and Cor. 4.5 that, under very general assumptions on the parameters set, RLWE and PLWE are equivalent for this family with at most a sub-quadratic noise increase under the twisted power basis embedding. 

Finally, in Subsection 4.1 we introduce and justify a hybrid embedding (usual power basis on the multiquadratic part twisted by the usual power basis in the cyclotomic side), and likewise we prove RLWE/PLWE equivalence in Thm. 4.6 by using the sharper bounds for the condition number mentioned in the first paragraph.

\subsection{Organisation of our work} In Section 2 we revise the Residue Number System (RNS) representation and how this tool serves to speed-up the arithmetic in relevant polynomial rings. Likewise, we also discuss the need for modern HE schemes to swap between representations, and how this causes a logarithmic increase in the computational complexity. 
A natural question which arises is whether there exists a more efficient representation, a question answered in ~\cite{PTGGP20,PTGGP21} by the second author in the affirmative for the family of the so called \emph{multiquadratic} number fields. We recall that, however, for this family the RLWE and PLWE problems are not equivalent, being this the reason for which we introduce a new family: the cyclo-multiquadratric number fields.

In Section 3 we recall some algebraic number theoretical tools to make the paper self-contained. In particular, we discuss the Kronecker product in some detail, as we will make use of it in a decisive manner. We also introduce the twisted power basis, discuss several results on the equivalence on cyclotomic number fields and show, with the help of the Kronecker product, that if we replace the usual power basis by the twisted power basis, the cyclotomic ring of integers admits a lattice structure for which RLWE and PLWE become equivalent for arbitrary degree.

In Section 4 we study the arithmetic of cyclo-multiquadratic number fields and show the equivalence of the RLWE and PLWE problems for this family, as well as we discuss how it keeps the computational efficiency. Finally, in the Appendix we give the proof of several sharp bounds for the condition number of cyclotomic fields whose conductor is divisible by at most six different primes, a result which we use in Section 4.

\section{Homomorphic encryption and cyclo-multiquadratic fields}

The majority of the efficiency improvements that PLWE brings about are strongly related to the algebraic structure of the used quotient polynomial ring $R = \mathbb{Z}[x]/(f(x))$. The most common choice is to have $f(x) = \Phi_n(x)$, the $n$-th cyclotomic polynomial. For the sake of exposition, we will simplify here things a little bit and consider that PLWE-based ciphertexts are composed of an unknown number of polynomial elements belonging to the ring $R_q = \mathbb{F}_q[x]/\Phi_n(x)$. Instead of making use of the coefficient representation, an adequate selection of the ciphertext modulus $q$ makes $\Phi_n(x)$ to split into linear factors,\footnote{In particular, $\Phi_n(x)$ decomposes into $\phi(n)$ distinct linear factors over $\mathbb{F}_q[x]$ if and only if $q \equiv 1 \mod{n}$.} which enables us to use the Chinese Remainder Theorem (CRT) as a means to efficiently operate with polynomials. While polynomial multiplication with the coefficient representation presents an asymptotic cost of $\mathcal{O}(\phi(n)\log{\phi(n)})$, this cost is reduced to $\mathcal{O}(\phi(n))$ under a CRT representation; hence being linear in the degree of the involved polynomials~\cite{LPR13}. Consequently, as PLWE-based primitives usually require to deal with a relatively high degree of the underlying number field, this alternative representation is widely used because it reduces the effect of the logarithmic factor in each polynomial multiplication.

This CRT tool is useful to speed-up any type of PLWE-based primitive and, in particular, it results to be fundamental to accelerate homomorphic encryption. In this scenario, the CRT is not only applied at the ``ciphertext'' layer by choosing an adequate modulo $q$, but also at the ``plaintext'' layer where an adequate plaintext modulo allows to batch several integers (usually refered as ``slots'') in only one encryption (as many as $\phi(n)$ slots per ciphertext). In addition to reducing cipher expansion with respect to plaintext size, this CRT isomorphism enables Single Instruction, Multiple Data (SIMD) operations directly over encrypted integer vectors~\cite{SV14}. Many of the most recent libraries dealing with homomorphic cryptography, such as TFHE-rs~\footnote{TFHE-rs: Pure Rust implementation of the TFHE scheme for boolean and integers FHE arithmetics, \url{https://github.com/zama-ai/tfhe-rs}.} and TFHE~\cite{CGGI20}, HElib~\cite{HS20}, Lattigo~\cite{MBTH20}, NFLlib~\cite{ABGGKL16}, PALISADE~\footnote{PALISADE Homomorphic Encryption Software Library, \url{https://palisade-crypto.org/}.} (currently updated and included inside the OpenFHE library~\cite{BBBCEGHHKL22}) and SEAL~\cite{sealcrypto} take advantage of different variants of this tool to optimize polynomial operations. Specifically, the BFV implementation of HElib and PALISADE uses a double-CRT representation and works over general cyclotomic number fields. This representation applies a first CRT to split the cyclotomic polynomial, and a second CRT over $\mathbb{F}_q$ to factor the coefficients of the polynomials depending on the prime-power-decomposition of the modulus $q$. The rest of implementations are specialized for power-of-two cyclotomic fields: (1) Libraries implementing the FHEW/TFHE~\cite{DM15,CGGI20} scheme usually make use of a Discrete Fourier Transform (DFT) representation by means of efficient Fast Fourier Transform (FFT) computations over complex numbers. (2) BFV and CKKS implementations~\cite{BEHZ16,HPS19} with $f(x) = x^m + 1$ make use of a CRT--NTT representation (where NTT stands for Number Theoretic Transform), in which a CRT is applied over all coefficients in $\mathbb{F}_q$, while a negacyclic NTT is applied to split $f(x)$ in linear factors. See Figure~\ref{fig:crtntt} for a toy example of this representation.\footnote{In Figure~\ref{fig:crtntt}, the Hadamard product between vectors $\bm{a}$ and $\bm{b}$ is denoted as $\bm{a} \circ \bm{b}$. Example extracted from~\cite{APU21}.}

\begin{figure}[ht!]
	\centering
	\includegraphics[width=0.7\columnwidth]{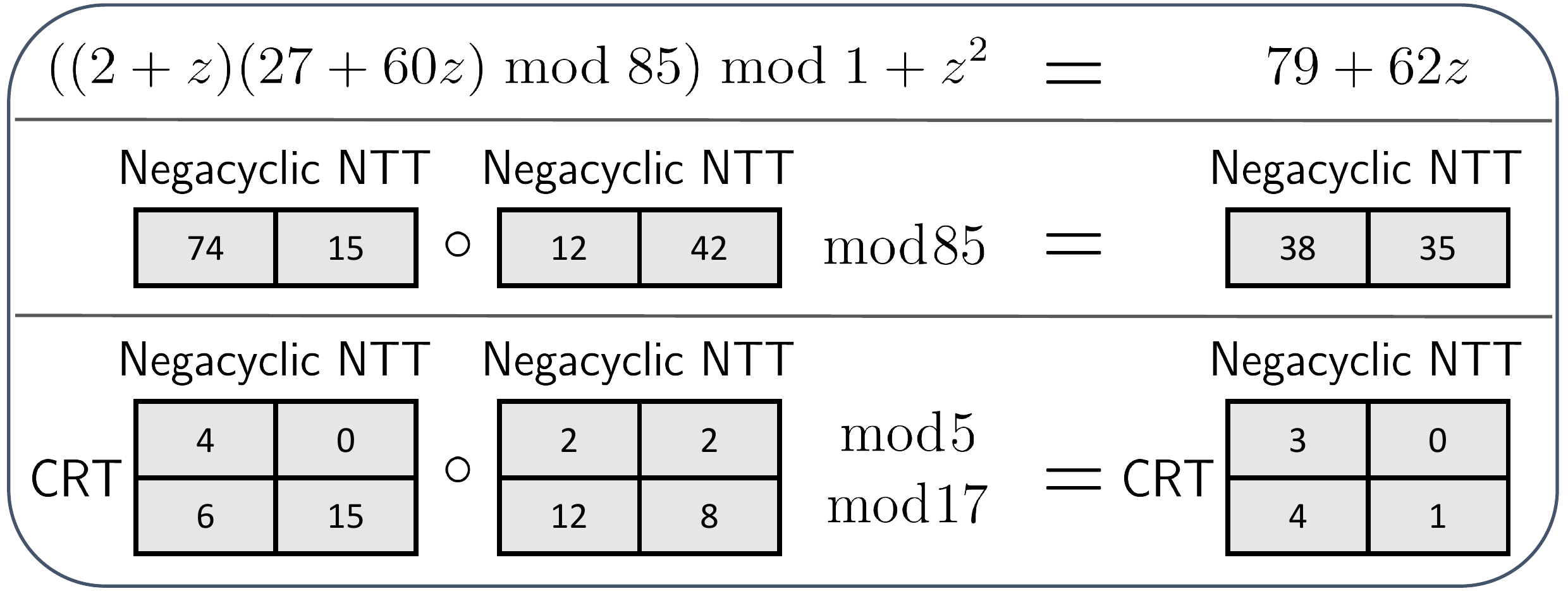}
	\caption{Toy example of the CRT-NTT representation.} 
	\label{fig:crtntt}
\end{figure}

Hence we see that the quotient polynomial ring $\mathbb{Z}_q[x]/(x^m+1)$ is the preferred choice by current libraries, as it enables efficient implementations of polynomial operations through previously computing fast radix\footnote{For a DFT/NTT of composite size, radix-type algorithms recursively express the transform in terms of a series of DFTs/NTTs of smaller size. The term radix here usually refers to the smallest factor considered in the recursive decompositions~\cite{DV90}.} algorithms of the DFT and NTT~\cite{ABGGKL16,Harvey14}. Also important, polynomial operations over the plaintext ring naturally correspond to basic blocks in practical signal processing applications~\cite{Nussbaumer,BPB10,PTP17}, comprising, among others, linear convolutions, filtering, and linear transforms.

\subsection{Non-polynomial operations and RNS representation}
In the community of computer arithmetic, the CRT representation described above is also referred to as Residue Number System (RNS). The benefits of staying in the CRT-NTT representation are not only asymptotic. The factorization into several terms produced by the CRT over $\mathbb{F}_q$ enables to fit the computation flow into the underlying machine word, with the consequent improvement on practical performance.

Unfortunately, the current state-of-the-art in HE, represented by schemes as CKKS and BFV, also makes an intensive use of other non-polynomial operations which are not entirely compatible with the CRT--NTT representation. One clear example is the case of coefficient rounding/rescaling, which is usually performed at the end of each ciphertext multiplication.

While there are several strategies to apply this rounding while staying in the first CRT decomposition~\cite{BEHZ16,HPS19}, currently there are no equivalent results for its negacyclic NTT counterpart. This means that whenever we execute a non-polynomial operation over each polynomial coefficient, we have to swap between NTT and coefficient-wise representations, which presents an asymptotic cost of $\mathcal{O}(m \log{m})$ elementary multiplications by means of efficient FFT-type algorithms.

\subsection{Efficient conversion between coefficient and CRT--NTT representation}

The inherent logarithm increase in computational cost which appears when swapping between CRT-NTT (or double-CRT) and coefficient representations is already contemplated in~\cite{HS21}, where the authors pose the question of \emph{whether there is a more compact representation that can be converted to double-CRT in linear time.}

Interestingly, this question can be answered in the affirmative for a concrete family of non-cyclotomic number fields, coined in~\cite{PTGGP20,PTGGP21} as multiquadratic number fields. Those works show how the convolution property displayed by these rings is compatible with a particular NTT transform, whose shape is related to a number theoretic version of the Walsh-Hadamard Transform (WHT). This transform can be very efficiently computed with a variant of the Fast Walsh-Hadamard transform algorithm (FWHT), which requires a total of $\mathcal{O}(m\log{m})$ elementary additions but only $\mathcal{O}(m)$ elementary multiplications. Consequently, by substituting the polynomial ring $R = \mathbb{Z}[x]/(x^m+1)$ by the ring $R = \mathbb{Z}[x]/(x_1^2+d_1, \ldots, x_r^2+d_r)$ in the PLWE formulation (with $m=2^r$), we can now take advantage of the different algebraic structure introduced by these multiquadratic rings. In practice, this means that we can swap between NTT and coefficient representations 
in linear time with respect to the number of elementary multiplications.


The benefits of this structure do not only amount to providing more efficient polynomial arithmetic~\cite{PTGGP20}, but it also introduces interesting improvements for homomorphic slot manipulation by adding new strategies and storage/computation tradeoffs for relinearization and linear matrix operations. All these benefits build on the natural hypercube structure of its group of automorphisms, which is the direct product $\underbrace{(\mathbb{Z}_2, +) \times \dots \times (\mathbb{Z}_2, +)}_{r}$, where $m = 2^r$ is the dimension of the corresponding multivariate number field. Contrarily, the hypercube structure considered in other works dealing with cyclotomic number fields~\cite{HS18,CCLS19,HHC19} relies on the group $\mathbb{Z}^*_m/(p)$, where extra homomorphic operations are required to deal with ``bad'' or ``very bad'' dimensions.


\subsection{Another non-cyclotomic family: Cyclo-multiquadratic fields}
The reduction from worst-case ideal lattice problems to RLWE~\cite{PRS17} also applies to multiquadratic number fields, namely, those of the form $\mathbb{Q}(\sqrt{d_1}, \ldots, \sqrt{d_r})$, whenever an adequate choice of $\{d_1, \ldots, d_r\}$ parameters is made~\cite{PTGGP21}. Hence, we can efficiently swap between polynomial coefficients and CRT--NTT representations with linear multiplicative complexity, while still backing up security on the hardness of RLWE, and consequently, answering in the affirmative the question posed in~\cite{HS21} regarding swapping ``double-CRT'' representations in linear time.

Delving now into its RLWE-PLWE relation, here we observe how the RLWE and PLWE problems defined, respectively, over multivariate number fields $\mathbb{Q}(\sqrt{d_1}, \ldots, \sqrt{d_r})$ and multivariate quotient polynomial rings $R = \mathbb{Z}[x]/(x_1^2+ d_1, \ldots, x_r^2+ d_r)$ are not equivalent in the sense we informally stated previously. Actually, \emph{the algorithm transforming RLWE samples into PLWE samples and vice versa does not cause a polynomial distortion in the error distribution, but instead quasi-polynomial}. In view of this, one last objective of this work is to explore related number field families where (1) the RLWE-PLWE equivalence still holds, and (2) the swapping between CRT--NTT representations is still more efficient than in the widespread cyclotomic case.

To this aim, we explore a non-cyclotomic family of number fields defined as the compositum of cyclotomic and multiquadratic fields~\cite{PTGGP21} (see Section~\ref{sec:multiquadratics}), which we refer to as cyclo-multiquadratic number fields in the present work. We find particular instantiations of this family which satisfy the RLWE/PLWE equivalence while still providing better concrete efficiency than cyclotomics when swapping between double-CRT representations. Unfortunately, it does seem to be the case that, \emph{to have ``polynomial'' RLWE/PLWE equivalence, we have to resign to have asymptotic linear complexity in the double-CRT transformation}.  We elaborate more on these tradeoffs next.

\emph{Tradeoff for hybrid cyclo-multiquadratic rings:} Our work suggests the existence of different concrete practical tradeoffs between the (1) ``polynomial/quasi-polynomial'' RLWE-PLWE equivalence and (2) ``quasi-linear/linear'' complexity for the double-CRT transform applied to all polynomial elements in cyclo-multiquadratic rings. For example, some simple parameters' choices already give more efficient double-CRT transforms than cyclotomic rings. This is done by decomposing $m$, the total field dimension, in terms of both its multiquadratic and cyclotomic subfields\footnote{Here $m_{cyclo}$ is the dimension of the cyclotomic subfield, and $m_{mult}$ is the dimension of the multiquadratic subfield.} as $m = m_{cyclo}m_{mult} = m^{1/l}m^{1-1/l}$, where the parameter $l$ controls the relative dimensions provided by each subfield. 

It can be seen that, in the above expression, for $l = \log{m}$ we have linear multiplicative complexity $\mathcal{O}(m)$, while for $l = \sqrt{\log{m}}$, we have $\mathcal{O}(m\sqrt{\log{m}})$ multiplicative complexity, which is obtained by combining the use of FFT-type and FWHT-type algorithms. Unfortunately, only a quasi-polynomial RLWE/PLWE equivalence remains in both cases.
 
On the contrary, Section~\ref{sec:multiquadratics} shows that, by means of Prop.~\ref{Prop:3.3}, sub-quadratic RLWE--PLWE equivalence can be achieved for cyclo-multiquadratic rings if $m_{cyclo} = 2^u$ and $m_{mult} = 2^r$ with  $u = r^{1 + 1/l}$ for fixed $l \geq 2$. Also, by allowing a more generic conductor on the cyclotomic side, we can still grant a polynomial condition number, with moderately low degree, as we prove in the last subsection. 
If we compare again cyclo-multiquadratic rings with the case of cyclotomics, the multiplicative complexity of the double-CRT transform is improved by a factor of $\frac{r + r^{1 + 1/l}}{r^{1 + 1/l}}$, while still keeping the RLWE-PLWE polynomial equivalence. In short, whenever we work in a range where we still have $l \gg r$ for this choice of parameters (or more precisely, $r^{1/l}$ is close enough to $1$), swapping between representations can be achieved twice faster (\emph{i.e.} with a constant improvement by a factor of $2$). It is also worth mentioning that, if we define $m_{mult} = 2^{cr}$ for a fixed constant $c \geq 1$, then, whenever we work in the range $l \gg r$, the swapping between representations is $c + 1$ times faster.

\section{Algebraic background}
For a field extension $L/F$, denote as usual by $\mathrm{Gal}(L/F)$ the corresponding Galois group, namely, the group of field automorphisms of $L$ which fix $F$.

Two field extensions $L/F$ and $M/F$ are said to be linearly disjoint if $L\cap M=F$. In that case, denoting by $LM$ the compositum of both extensions, it is well known that
\begin{equation}
Gal(LM/F)\cong Gal(L/F)\times Gal(M/F).
\label{galoisprod}
\end{equation}

Let $K = \mathbb{Q}(\theta)$ be an algebraic number field of degree $n$ and let $f(x) \in\mathbb{Q}[x]$ be the minimal polynomial of $\theta$. The evaluation-at-$\theta$ map is a field $\mathbb{Q}$-isomorphism $\mathbb{Q}[x]/(f(x))\cong K$. 

The field $K$ is equipped with $n$ field $\mathbb{Q}$-embeddings $\sigma_i: K \hookrightarrow \overline{\mathbb{Q}}$, with $1\leq i\leq n$ and where $\overline{\mathbb{Q}}$ is a fixed algebraic closure of $\mathbb{Q}$. Each of these morphisms is determined by its image at $\theta$, \emph{i.e.} $\sigma_i(\theta)=\theta_i$, where $\{\theta_1:=\theta,\theta_2,...,\theta_n\}$ are the roots of $f$.

The extension $K/\mathbb{Q}$ is said to be Galois if $K$ is the splitting field of $f(x)$. Denoting by $s_1$ the number of real embeddings, \emph{i.e.} those whose image is contained in $\mathbb{R}$, and by $s_2$ the number of pairs of complex non-real embeddings, one has $n=s_1+2s_2$.

For $z\in\mathbb{C}$, denote by $z^*$ its complex conjugate. We will make use of the metric space
$$
\Lambda_n := \left\lbrace \!\!\!\!\!
\begin{array}{c|l}
\begin{array}{c}
(x_1,...,x_n)\in \mathbb{R}^{s_1}\times\mathbb{C}^{2s_2}\! 
\end{array}\!\!
& \!
\begin{array}{l}
x_{s_1+i}=x_{s_1+s_2+i}^* \\
\mbox{for }1\leq i\leq s_2
\end{array}
\end{array}\!\!\!\!\!
\right\rbrace,
$$
endowed with the inner product induced by the usual one in $\mathbb{C}^n$. 

\begin{defn}The canonical embedding $\sigma_K: K\to \Lambda_n$ is the ring monomorphism defined as:
$$
\sigma_K(x):=(\sigma_1(x),...,\sigma_n(x)),
$$ 
where the addition and product on the left are those of the field $K$ and on the right are defined componentwise. When $K$ is clear from the context we will write $\sigma$ instead of $\sigma_{K}$. 
\end{defn}

By a lattice in $\Lambda_n$, we will understand, as usual, a pair $(L,\iota)$ where $L$ is a finitely generated and torsion free group and $\iota: L\hookrightarrow\Lambda_n$ is a group monomorphism. We will only deal with full rank lattices, namely, those whose rank is precisely $n$.

Recall that an algebraic integer is an element of $\overline{\mathbb{Q}}$ whose minimal polynomial belongs to $\mathbb{Z}[x]$. The set $\mathcal{O}_K$ of algebraic integers in $K$ is a ring: the ring of integers of $K$. We will assume that $K$ is monogenic, namely, that  $\mathcal{O}_K=\mathbb{Z}[\theta]$ for some $\theta\in K$. It is well known (see for instance \cite{ST87}) that $\mathcal{O}_K$ is a free $\mathbb{Z}$-module of rank $n$, thus for each ideal $I\subseteq\mathcal{O}_K$ the pair $(I,\sigma)$ is a full rank lattice in $\Lambda_n$. 

\begin{defn}An ideal lattice is a lattice $(L,\iota)$ such that $\iota(L)=\eta(I)$ for $I$ an ideal of a ring $R$ and $\eta: R\hookrightarrow \Lambda_n$ a ring monomorphism (the product in $\Lambda_n$ being defined componentwise).
\end{defn}

Denoting $\mathcal{O}:=\mathbb{Z}[x]/(f(x))$, we can embed this ring as a lattice into $\mathbb{R}^n$ in a different manner:
\begin{defn}The coordinate embedding of $\mathcal{O}$ is 
$$
\begin{array}{rcl}
\sigma_{C,K}: \mathcal{O} & \to & \mathbb{R}^n\\
\sum_{i=0}^{n-1}a_i\overline{x}^i & \mapsto & (a_0,a_1,...,a_{n-1}),
\end{array}
$$
where $\overline{x}$ denotes the class of $x$ modulo $f(x)$. When $K$ is clear from the context we will write $\sigma_C$ instead of $\sigma_{C,K}$. 
\end{defn}
The evaluation at $\theta$ map composed  with the canonical embedding $V_f$ transforms the lattice $\left(\mathcal{O},\sigma_C\right)$ to the lattice $\left(\mathcal{O}_K,\sigma\right)$:
\begin{equation*}
\begin{array}{ccc}
V_{f}: \mathcal{O} & \to & \sigma_1(\mathcal{O}_K)\times\cdots\times\sigma_n(\mathcal{O}_K)\\
\displaystyle\sum_{i=0}^{n-1}a_i\overline{x}^i & \mapsto & 

\left(\begin{array}{cccc}
1 & \theta_1 & \cdots & \theta_1^{n-1}\\ 
1 & \theta_2 & \cdots & \theta_2^{n-1}\\  
\vdots & \vdots & \ddots \vdots\\ 
1 & \theta_n & \cdots & \theta_n^{n-1}
\end{array}\right)
\left(\begin{array}{c}
a_0\\
a_1\\
\vdots\\
a_{n-1}
\end{array}\right),
\end{array}
\label{latticebij}
\end{equation*}
namely, $V_{f}$ is given by a Vandermonde matrix left-multiplying the vector of coordinates. 

We will also deal with a third embedding, more suitable for certain applications: the twisted coordinate embedding, which we present next.

Suppose that $f(x)$ and $g(x)$ are two irreducible monic polynomials with integer coefficients of degrees $n$ and $m$ respectively and with no common roots. Denote by $K_f\cong\mathbb{Q}[x]/(f(x))$ and $K_g\cong\mathbb{Q}[x]/(g(x))$ their corresponding splitting fields, which we assume monogenic, so that $\mathcal{O}_{K_f}\cong \mathbb{Z}[x]/(f(x))$ and $\mathcal{O}_{K_g}\cong \mathbb{Z}[x]/(g(x))$. Denoting by $K_{fg}$ the compositum $K_fK_g$, we have:

\begin{align}
\begin{split}
\mathbb{Z}[x,y]/(f(x),g(y)) \cong \;\; & \mathbb{Z}[x]/(f(x))\otimes_{\mathbb{Z}}\mathbb{Z}[y]/(g(y)) \\
\cong \;\; & \left\{\sum_{i=0}^{n-1}\sum_{j=0}^{m-1}a_{i,j}\overline{x}^i\overline{y}^j, a_{i,j}\in\mathbb{Z}\right\},
\end{split}
\end{align}

where $\overline{x}$ and $\overline{y}$ are, respectively, the class of $x$ modulo $f(x)$ and the class of $y$ modulo $g(y)$.

\begin{defn}The twisted coordinate embedding of $\mathbb{Z}[x,y]/(f(x),g(y))$ is 
$$
\begin{array}{rcl}
\sigma_{T,K}: \mathbb{Z}[x,y]/(f(x),g(y)) & \to & \mathbb{R}^{nm}\\
\displaystyle\sum_{i=0}^{n-1}\sum_{j=0}^{m-1}a_{i,j}\overline{x}^i\overline{y}^j &  \mapsto & (a_{i,j})_{i,j=0}^{n,m}.
\end{array}
$$
\end{defn}

\subsection{The condition number and the Kronecker product} For a matrix $A\in\mathrm{M}_n(\mathbb{C})$, denote by $||A||$ its Frobenius norm, namely, $||A||=\sqrt{Tr(AA^*)}$, where $A^*$ is the transposed conjugated matrix of $A$ and $Tr$ denotes the matrix trace.

\begin{defn}For a matrix $A\in \mathrm{GL}_n(\mathbb{C})$, the condition number of $A$ is defined as
$$
\mathrm{Cond}(A):=||A||||A^{-1}||,
$$
where $||A||$ is the Frobenius norm of $A$ (see \cite{blanco2} Subsection 2.3 for more details).
\end{defn}
The condition number of $V_f$ measures the noise amplification caused by sending PLWE samples to RLWE samples and vice-versa.
\begin{defn}For two matrices $A=(a_{ij})_{i,j=1}^{m,n}\in\mathrm{M}_{m\times n}(\mathbb{C})$ and $B=(b_{ij})_{i,j=1}^{p,q}\in\mathrm{M}_{p\times q}(\mathbb{C})$ the Kronecker product $A\otimes B$ is the $\mathrm{M}_{pm\times qn}(\mathbb{C})$ block matrix
$$
A\otimes B=\left(
\begin{array}{ccc}
a_{11}B & \cdots & a_{1n}B\\
\vdots & \ddots & \vdots\\
a_{m1}B & \cdots & a_{mn}B\\
\end{array}
\right).
$$
\end{defn}
A standard computation shows that if $A$, $B$, $C$, and $D$ are matrices such that the products $AC$ and $BD$ are defined, then
\begin{equation}
(A\otimes B)(C\otimes D)=(AC)\otimes (BD).
\end{equation}
As a consequence we have:
\begin{lem}Given square matrices $A\in M_p(\mathbb{C})$ and $B\in M_q(\mathbb{C})$, the matrix $A\otimes B$ is invertible if and only if $A$ and $B$ are invertible and in this case:
$$
\left(A \otimes B\right)^{-1}=A^{-1}\otimes B^{-1}.
$$
\label{condinv}
\end{lem}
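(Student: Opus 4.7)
The plan is to derive both the invertibility equivalence and the explicit formula for the inverse directly from the mixed product property $(A\otimes B)(C\otimes D)=(AC)\otimes (BD)$ just stated in the excerpt, together with the observation that $I_p\otimes I_q = I_{pq}$.

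First I would handle the ``if'' direction and the formula simultaneously. Assuming $A$ and $B$ are invertible, I would compute
\begin{equation*}
(A\otimes B)(A^{-1}\otimes B^{-1}) = (AA^{-1})\otimes (BB^{-1}) = I_p\otimes I_q = I_{pq},
\end{equation*}
and symmetrically $(A^{-1}\otimes B^{-1})(A\otimes B) = I_{pq}$. This shows simultaneously that $A\otimes B$ is invertible and that its inverse is exactly $A^{-1}\otimes B^{-1}$.

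For the ``only if'' direction I would proceed by contrapositive. Suppose $A$ is not invertible. Then there exists a nonzero column vector $v\in\mathbb{C}^p$ with $Av=0$. Picking any nonzero $w\in\mathbb{C}^q$, the Kronecker product $v\otimes w\in\mathbb{C}^{pq}$ is nonzero (as can be checked entrywise), while the mixed product property applied to column vectors (viewed as $p\times 1$ and $q\times 1$ matrices) yields
\begin{equation*}
(A\otimes B)(v\otimes w) = (Av)\otimes (Bw) = 0\otimes (Bw) = 0,
\end{equation*}
so $A\otimes B$ has nontrivial kernel and is singular. The same argument with the roles of $A$ and $B$ swapped (using a nonzero $u\otimes v$ with $Bv=0$) handles the case when $B$ fails to be invertible. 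Combining this with the previous paragraph gives the equivalence.

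I do not expect any real obstacle here: the entire argument is a two-line manipulation of the mixed product identity plus a kernel computation. A stylistic alternative would be to invoke the determinant identity $\det(A\otimes B)=\det(A)^q\det(B)^p$ for the ``only if'' direction, but the kernel argument above is self-contained and avoids appealing to a result the paper has not stated.
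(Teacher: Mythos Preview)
Your proof is correct and follows exactly the approach the paper indicates: the lemma is stated there simply ``as a consequence'' of the mixed product identity $(A\otimes B)(C\otimes D)=(AC)\otimes(BD)$, with no further details given. Your argument supplies precisely those details, and in fact goes slightly beyond what the paper sketches by also treating the ``only if'' direction explicitly via a kernel argument.
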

The next lemma is proved in a straightforward manner and shows that the Kronecker product also behaves nicely with respect to the Frobenius norm
\begin{lem}Given $A\in M_p(\mathbb{C})$ and $B\in M_q(\mathbb{C})$, we have:
$$
||A\otimes B||=||A||||B||.
$$
\label{condnorm}
\end{lem}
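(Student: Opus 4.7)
The plan is to unpack the Frobenius norm through one of its two equivalent formulations and exploit the block structure of the Kronecker product.

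My preferred route is via the entry-wise definition $\|A\|^2 = \sum_{i,j} |a_{ij}|^2$. By the definition of $A \otimes B$, the entries of the Kronecker product are precisely the products $a_{ij} b_{kl}$ for $1 \le i,j \le p$ and $1 \le k,l \le q$, each appearing exactly once in the $pq \times pq$ matrix. Therefore
\begin{equation*}
\|A \otimes B\|^2 = \sum_{i,j,k,l} |a_{ij} b_{kl}|^2 = \left(\sum_{i,j}|a_{ij}|^2\right)\left(\sum_{k,l}|b_{kl}|^2\right) = \|A\|^2 \|B\|^2,
\end{equation*}
and taking square roots gives the claim.

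An alternative (and perhaps more conceptual) argument I would consider presenting instead uses the trace characterization $\|A\|^2 = \operatorname{Tr}(A A^*)$. First, one observes that $(A \otimes B)^* = A^* \otimes B^*$, which follows directly from the block form of the Kronecker product together with entry-wise conjugation. Combining this with the multiplicativity identity $(A \otimes B)(C \otimes D) = (AC) \otimes (BD)$ already recorded before the lemma gives $(A \otimes B)(A \otimes B)^* = (AA^*) \otimes (BB^*)$. The proof then reduces to the identity $\operatorname{Tr}(X \otimes Y) = \operatorname{Tr}(X)\operatorname{Tr}(Y)$, which is immediate from the block-diagonal contribution of the diagonal entries of $X$ scaled by the diagonal of $Y$.

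Since the statement is of an elementary computational nature, there is no serious obstacle; the only point that requires a moment of care is ensuring that the indexing of the Kronecker product entries is accounted for without repetition, so that the double sum factors cleanly. I would present the first (entry-wise) proof as it is the most transparent and avoids any appeal to the trace formalism beyond what the previous lemma already relies on.
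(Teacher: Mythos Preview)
Your proof is correct. The paper in fact does not give a proof of this lemma at all: it simply states that the result ``is proved in a straightforward manner'' and moves on. Your entry-wise argument is exactly the kind of direct computation the authors have in mind, and your alternative via $\operatorname{Tr}(X\otimes Y)=\operatorname{Tr}(X)\operatorname{Tr}(Y)$ is an equally valid route that dovetails with the trace definition of the Frobenius norm used in the paper.
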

As a corollary, the Kronecker product satisfies the following property with respect to the condition number:
\begin{cor}Given $A\in \mathrm{GL}_p(\mathbb{C})$ and $B\in \mathrm{GL}_q(\mathbb{C})$, we have:
$$
\mathrm{Cond}(A\otimes B)=\mathrm{Cond}(A)\mathrm{Cond}(B).
$$
\label{condkron}
\end{cor}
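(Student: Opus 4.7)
The plan is to combine the two preceding lemmas directly: Lemma \ref{condinv} tells us how inversion interacts with the Kronecker product, and Lemma \ref{condnorm} tells us that the Frobenius norm is multiplicative with respect to it. Since the condition number is defined as the product $\mathrm{Cond}(A)=\|A\|\,\|A^{-1}\|$, both ingredients are exactly what we need.

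First, I would start from the definition applied to $A\otimes B$, writing
\[
\mathrm{Cond}(A\otimes B) = \|A\otimes B\|\cdot \|(A\otimes B)^{-1}\|.
\]
Then I would invoke Lemma \ref{condinv} to rewrite $(A\otimes B)^{-1}$ as $A^{-1}\otimes B^{-1}$, which is legitimate because $A\in\mathrm{GL}_p(\mathbb{C})$ and $B\in\mathrm{GL}_q(\mathbb{C})$ are both invertible, so the hypothesis of the lemma is satisfied and $A\otimes B$ is itself invertible.

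Next I would apply Lemma \ref{condnorm} twice, once to the factor $\|A\otimes B\|=\|A\|\,\|B\|$ and once to $\|A^{-1}\otimes B^{-1}\|=\|A^{-1}\|\,\|B^{-1}\|$. Rearranging the scalar product gives
\[
\|A\|\,\|B\|\,\|A^{-1}\|\,\|B^{-1}\| = \bigl(\|A\|\,\|A^{-1}\|\bigr)\bigl(\|B\|\,\|B^{-1}\|\bigr) = \mathrm{Cond}(A)\,\mathrm{Cond}(B),
\]
which concludes the argument.

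There is essentially no obstacle here: the work was already done in the two preceding lemmas, and the corollary is a formal three-line manipulation. The only point requiring a small verbal remark is that rearranging the four scalar norms is permitted because multiplication in $\mathbb{R}_{\geq 0}$ is commutative, so no subtlety of operator ordering appears despite the Kronecker product being noncommutative.
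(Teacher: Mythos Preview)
Your proposal is correct and follows essentially the same approach as the paper: invoke Lemma~\ref{condinv} to write $(A\otimes B)^{-1}=A^{-1}\otimes B^{-1}$, apply Lemma~\ref{condnorm} to each Kronecker factor, and regroup. The paper's proof is the same three-line manipulation, just written slightly more tersely.
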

\begin{proof}First, from Lemma \ref{condnorm} we have:
$$
||A\otimes B||=||A||||B||.
$$
Secondly, from Lemma \ref{condinv} we have:
$$
||(A\otimes B)^{-1}||=||A^{-1}\otimes B^{-1}||=||A^{-1}||||B^{-1}||,
$$
hence the result follows.
\end{proof}

\subsection{Cyclotomic fields. First facts.}

For $n\geq 2$, let us denote by $\Phi_n(x)$ the $n$-th cyclotomic polynomial, namely, the minimal polynomial of a primitive $n$-th root of unity $\zeta_n$. As well known, its degree is $m:=\phi(n)$, where $\phi$ is Euler's totient function, namely, $\phi(n)$ is the number of positive integers coprime to $n$ and less than or equal to $n$. For $n=p_1^{k_1}...p_r^{k_r}$, denote $\rad(n):=p_1...p_r$ and $\omega(n):=r$, the number of different prime divisors of $n$.
Let us denote by $K_n$ the $n$-th cyclotomic field, namely, the splitting field of $\Phi_n(x)$, and by $\mathcal{O}_{K_n}$ its ring of integers. It is also well known that $\mathcal{O}_{K_n}=\mathbb{Z}[\zeta_n]$. The cyclotomic extension $K_n/\mathbb{Q}$ is Galois and its Galois group is
$$
Gal(K_n/\mathbb{Q})\cong\left(\mathbb{Z}/n\mathbb{Z}\right)^*.
$$
Moreover, since the extensions $K_{{p_1^{k_1}}}/\mathbb{Q}$,...,$K_{{p_r^{k_r}}}/\mathbb{Q}$ are linearly disjoint, from Equation \ref{galoisprod} we have
\begin{equation}
\mathrm{Gal}(K_n/\mathbb{Q})\cong \mathrm{Gal}(K_{{p_1^{k_1}}}/\mathbb{Q})\times\cdots \times \mathrm{Gal}(K_{{p_r^{k_r}}}/\mathbb{Q}).
\label{cyclopro}
\end{equation}
For $K_n$ and $K_m$ with $n$ and $m$ coprime, write $N=\phi(n)$ and $M=\phi(m)$ so that $K_n=\mathbb{Q}(\zeta_n)$ and $K_m=\mathbb{Q}(\zeta_m)$. Let $\{\zeta_{n,i}\}_{i=1}^N$ be the set of the $N$ conjugated primitive $n$-th roots of unity, which is a $\mathbb{Q}$-basis of $K_n$ and $\{\zeta_{m,i}\}_{i=1}^M$ the set of the $M$ conjugated primitive $m$-th roots of unity, which analogously is a $\mathbb{Q}$-basis of $K_m$.

On the other hand, the set $\{1,\zeta_n,...,\zeta_n^{N-1}\}$ is also a $\mathbb{Q}$-basis of $K_n$ and the set $\{1,\zeta_m,...,\zeta_m^{M-1}\}$ is also a $\mathbb{Q}$-basis of $K_m$ and since the the extensions are linearly disjoint, the set $\{\zeta_n^i\zeta_m^j\}_{i,j=0}^{N-1,M-1}$ is a $\mathbb{Q}$-basis of $K_nK_m=K_{nm}/\mathbb{Q}$. 

\begin{defn}The twisted basis of $K_{nm}$ (also called \emph{full power basis}~\cite{LPR13}) is the basis $\{\zeta_n^i\zeta_m^j\}_{i,j=0}^{N-1,M-1}$. Notice that this basis is different to the usual power basis $\{(\zeta_n\zeta_m)^k\}_{k=0}^{NM-1}$.
\end{defn}

\begin{lem}The twisted basis of $K_{nm}$ is also a $\mathbb{Z}$-basis of $\mathcal{O}_{K_{nm}}$.
\end{lem}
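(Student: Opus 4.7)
The plan is to reduce the claim to the classical fact that when two number fields have coprime discriminants, the ring of integers of their compositum is exactly the (tensor) product of their individual rings of integers, and a product of integral bases gives an integral basis.

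First, I would recall that $\mathcal{O}_{K_n}=\mathbb{Z}[\zeta_n]$ and $\mathcal{O}_{K_m}=\mathbb{Z}[\zeta_m]$, so that $\{1,\zeta_n,\ldots,\zeta_n^{N-1}\}$ and $\{1,\zeta_m,\ldots,\zeta_m^{M-1}\}$ are $\mathbb{Z}$-bases of $\mathcal{O}_{K_n}$ and $\mathcal{O}_{K_m}$ respectively. Since $\gcd(n,m)=1$, the extensions $K_n/\mathbb{Q}$ and $K_m/\mathbb{Q}$ are linearly disjoint (as already used to write their compositum as $K_{nm}$), and moreover their discriminants $d_{K_n}$ and $d_{K_m}$ are coprime: it is a standard fact that a rational prime ramifies in $K_n$ only if it divides $n$, so the prime divisors of $d_{K_n}$ are contained in those of $n$, and analogously for $d_{K_m}$; coprimality of $n$ and $m$ then yields coprimality of the discriminants.

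Next I would invoke the classical result (see for instance \cite{ST87}) that whenever $L$ and $L'$ are two number fields with coprime discriminants, one has $\mathcal{O}_{LL'}=\mathcal{O}_L\cdot\mathcal{O}_{L'}$, and the product of $\mathbb{Z}$-bases of $\mathcal{O}_L$ and $\mathcal{O}_{L'}$ is a $\mathbb{Z}$-basis of $\mathcal{O}_{LL'}$. Applied with $L=K_n$ and $L'=K_m$, this shows directly that
\[
\{\zeta_n^i\zeta_m^j\}_{i,j=0}^{N-1,M-1}
\]
is a $\mathbb{Z}$-basis of $\mathcal{O}_{K_{nm}}$, which is the desired conclusion.

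The main (and essentially only) obstacle is the invocation of the compositum theorem: if one wishes to make the proof self-contained, it can be deduced by comparing the discriminant of the product basis $\{\zeta_n^i\zeta_m^j\}$ with $d_{K_{nm}}$. Explicitly, the discriminant of the product basis equals $d_{K_n}^{M}\, d_{K_m}^{N}$, and this divides $d_{K_{nm}}$ times the square of the index $[\mathcal{O}_{K_{nm}}:\mathbb{Z}[\zeta_n,\zeta_m]]$; since $d_{K_n}$ and $d_{K_m}$ are coprime, one deduces that this index is $1$, so the product basis indeed generates $\mathcal{O}_{K_{nm}}$ over $\mathbb{Z}$. Either route completes the proof.
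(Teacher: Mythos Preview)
Your proof is correct, but it takes a different route from the paper's. The paper's argument is more elementary and specific to cyclotomic fields: it exploits that $\zeta_n\zeta_m$ is itself a primitive $nm$-th root of unity (since $\gcd(n,m)=1$), so $\mathcal{O}_{K_{nm}}=\mathbb{Z}[\zeta_{nm}]=\mathbb{Z}[\zeta_n\zeta_m]$ is automatically contained in the $\mathbb{Z}$-span of the twisted basis; the reverse inclusion is immediate since each $\zeta_n^i\zeta_m^j$ is an algebraic integer of $K_{nm}$. No discriminant computation or compositum theorem is needed.

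Your approach via the coprime-discriminant compositum theorem is heavier but more robust: it does not rely on the target ring of integers being monogenic, and the same argument would go through verbatim for any pair of number fields with coprime discriminants. The paper's argument, by contrast, is shorter precisely because it leans on the known fact $\mathcal{O}_{K_{nm}}=\mathbb{Z}[\zeta_{nm}]$; once one already accepts that, both inclusions become one-liners. In effect, you are reproving part of that fact from scratch via discriminants, whereas the paper takes it as a black box.
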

\begin{proof}Clearly $\zeta_n\zeta_m\in\mathbb{Z}[\zeta_n^i\zeta_m^j]$ hence $\mathcal{O}_{K_{nm}}\subseteq \mathbb{Z}[\zeta_n^i\zeta_m^j]$. Now, $\zeta_n=\zeta_{nm}^m$ and $\zeta_m=\zeta_{nm}^n$ hence $\zeta_n,\zeta_m\in K_{nm}$. Hence, $\zeta_n^i\zeta_m^j\in K_{nm}$ for every $i$ and $j$. Since these elements are algebraic integers the result follows.
\end{proof}

Now, due to Equation \ref{galoisprod}, we have
\begin{equation}
\mathrm{Gal}(K_nK_m/\mathbb{Q})\cong \left(\mathbb{Z}/n\mathbb{Z}\right)^*\times \left(\mathbb{Z}/m\mathbb{Z}\right)^*,
\label{galoiscross}
\end{equation}
where an element $[i,j]$ of the right hand side of Equation \ref{galoiscross} acts on $K_nK_m$ as $[i,j](\zeta_n\zeta_m):=\zeta_n^i\zeta_m^j$.

\begin{defn}Denote $V_{K_n}=\left(\zeta_{n,i}^l\right)_{1\leq i,l+1\leq N}$ and $V_{K_m}=\left(\zeta_{m,j}^k\right)_{1\leq j,k+1\leq N}$. The twisted Vandermonde matrix for the extension $K_{nm}$ is 
$$
TV_{K_{nm}}:=V_{K_n}\otimes V_{K_m}.
$$
\end{defn}
Notice that $V_{K_n}$ corresponds to the change from the coordinate-to-canonical embedding of $K_n$ and analogously with $V_{K_m}$. Hence the twisted Vandermonde matrix  $TV_{K_{nm}}$ corresponds to the change from the twisted coordinate embedding-to-canonical embedding of $K_{nm}$ with respect to the twisted basis of $K_{nm}$.

\subsection{Condition numbers of cyclotomic fields}

The investigation of the condition number of the Vandermonde matrices attached to cyclotomic fields has been the study of a good number of articles in the R/P-LWE literature. First of all, it is straightforward to check that for power of two conductor, the Vandermonde matrix is a scaled isometry. In \cite{DD12}, the authors deal with the case of conductor $p$ and $2p$ with $p$ prime. Recently, the following result gives a closed formula for the case of conductor $2^kp^l$:

\begin{thm}[\cite{SSS21}, Thm. 3.2]~\label{its} Let $n=p^k$, where $p$ is any prime number and $k$ is a positive integer, or $n=2^kp^l$, where $p>2$ is prime and $k$ and $l$ are positive integers. Denote by $V_{\Phi_n}$ the Vandermonde matrix corresponding to the coordinate to canonical embedding transformation for $O_{K_n}$. Then
$$
\mathrm{Cond}(V_{K_n})=\phi(n)\sqrt{2\left(1-\frac{1}{p}\right)}.
$$
\end{thm}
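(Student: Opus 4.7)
The plan is to compute the two Frobenius norms $\|V_{K_n}\|$ and $\|V_{K_n}^{-1}\|$ separately and then multiply. The first is immediate: every entry of $V_{K_n}$ is a root of unity, so $\|V_{K_n}\|^2 = \phi(n)^2$. For the second, I would use $\|V_{K_n}^{-1}\|^2 = \mathrm{Tr}(M^{-1})$ with $M := V_{K_n}^* V_{K_n}$, and observe that the $(j,l)$-entry of $M$ is the Ramanujan sum $c_n(l-j) = \sum_{\gcd(a,n)=1}\zeta_n^{a(l-j)}$. The task then reduces to diagonalising this Toeplitz-like matrix whose entries are Ramanujan sums.

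Using the closed form $c_n(r) = \mu(n/\gcd(n,r))\,\phi(n)/\phi(n/\gcd(n,r))$ together with the restriction $|l-j|<\phi(n)$, one identifies the nonzero positions. For $n=p^k$, only multiples of $d := p^{k-1}$ survive, with value $\phi(n)$ on the diagonal and $-p^{k-1}$ elsewhere. For $n=2^k p^l$, multiplicativity gives $c_n = c_{2^k}\,c_{p^l}$, and a case analysis on the $2$-adic and $p$-adic valuations of $l-j$ leaves only multiples of $d := 2^{k-1}p^{l-1}$, with value $\phi(n)$ on the diagonal, $+d$ where $(l-j)/d$ is odd, and $-d$ where $(l-j)/d$ is nonzero and even. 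In both cases, reordering the indices $\{0,\dots,\phi(n)-1\}$ by residue class modulo $d$ block-diagonalises $M$ into $d$ identical copies of a single $(p-1)\times(p-1)$ matrix $M'$.

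Diagonalising $M'$ is now direct. For $n=p^k$ one has $M' = p^{k-1}(pI_{p-1}-J_{p-1})$, and the standard spectrum $\mathrm{spec}(J_{p-1}) = \{p-1,0,\dots,0\}$ yields eigenvalues $p^{k-1}$ (simple) and $p^k$ (with multiplicity $p-2$). For $n=2^k p^l$ the sign alternation by parity of $(l-j)/d$ forces an extra step: splitting the indices within the block by parity writes $M'/d$ as the $2\times 2$ block matrix $\bigl(\begin{smallmatrix} pI-J & J\\ J & pI-J\end{smallmatrix}\bigr) = pI_{p-1} + B\otimes J_{(p-1)/2}$ with $B=\bigl(\begin{smallmatrix}-1&1\\1&-1\end{smallmatrix}\bigr)$; since $\mathrm{spec}(B)=\{-2,0\}$ and $\mathrm{spec}(J_{(p-1)/2}) = \{(p-1)/2,0\}$, the Kronecker spectrum again gives eigenvalues $1$ (simple) and $p$ (with multiplicity $p-2$) for $M'/d$, hence $d$ and $dp$ for $M'$.

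Summing over the $d$ blocks, $M$ has eigenvalues $d$ with multiplicity $d$ and $dp$ with multiplicity $d(p-2)$, so $\mathrm{Tr}(M^{-1}) = 1 + (p-2)/p = 2(1-1/p)$, yielding $\|V_{K_n}^{-1}\| = \sqrt{2(1-1/p)}$ and therefore $\mathrm{Cond}(V_{K_n}) = \phi(n)\sqrt{2(1-1/p)}$. The main obstacle I anticipate is the case $n = 2^k p^l$: one must verify carefully that the Ramanujan-sum values really exhibit the claimed $\pm d$ alternation on each residue class of $d$, and that the parity-based block decomposition of $M'$ reproduces exactly the same spectrum as in the simpler case $n=p^k$, so that the final formula is independent of which case we are in.
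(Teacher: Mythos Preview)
The paper does not actually prove this theorem: it is quoted verbatim from \cite{SSS21} (as Theorem~3.2 there) and used as a black box, so there is no ``paper's own proof'' to compare against. Your argument is therefore not a reproduction but a genuine, self-contained proof.

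Your approach is correct. The identification $\|V_{K_n}^{-1}\|^2=\mathrm{Tr}((V_{K_n}^*V_{K_n})^{-1})$ together with the observation that the Gram matrix has entries given by Ramanujan sums $c_n(l-j)$ is exactly the right reduction. For $n=p^k$ the block-diagonalisation by residues modulo $p^{k-1}$ is clean and the spectrum of $pI_{p-1}-J_{p-1}$ gives the result directly. For $n=2^kp^l$ your sign analysis is also right: since $|l-j|<\phi(n)=d(p-1)$ one always has $v_p(l-j)=l-1$ on the nonzero multiples of $d$, so $c_{p^l}$ contributes the fixed factor $-p^{l-1}$, while $c_{2^k}$ contributes $\pm 2^{k-1}$ according to the parity of $(l-j)/d$; this yields precisely the $\pm d$ alternation you claim. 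The further parity split inside each block, rewriting $M'/d$ as $pI_{p-1}+B\otimes J_{(p-1)/2}$ with $B=\bigl(\begin{smallmatrix}-1&1\\1&-1\end{smallmatrix}\bigr)$, is an elegant way to see that the spectrum is again $\{1,p,\dots,p\}$, so that both cases collapse to the same trace computation $\mathrm{Tr}(M^{-1})=1+(p-2)/p=2(1-1/p)$. The anticipated ``obstacle'' in the second case is thus fully handled by your argument; the only cosmetic point worth adding is that the edge cases $p=2$ (for $n=2^k$) and $p=3$ (where $(p-1)/2=1$) are covered by the same formulas without modification.
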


Moving to a conductor which is the product of more primes is not at all a straightforward task. For instance, in \cite{blanco1} the author gives the asymptotic upper bound:
\begin{thm}[\cite{blanco1} Thm. 3.10]~\label{Th:2.14} Let $n\geq 2$ and denote by $A(n)$ the largest coefficient, in absolute value, of $\Phi_n(x)$. Denote $m=\phi(n)$ as usual. If $\rad(n)=p_1...p_k$, then:
$$
\mathrm{Cond}(V_{\Phi_n})\leq 2\rad(n)n^{2^k+k+2}A(n).
$$
\end{thm}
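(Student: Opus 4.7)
The plan is to bound the two Frobenius norms $\|V_{\Phi_n}\|$ and $\|V_{\Phi_n}^{-1}\|$ separately and apply $\mathrm{Cond}(V_{\Phi_n}) = \|V_{\Phi_n}\|\,\|V_{\Phi_n}^{-1}\|$. The first norm is immediate: every entry of $V_{\Phi_n}$ is a primitive $n$-th root of unity, of modulus $1$, so $\|V_{\Phi_n}\|^2 = m^2$ and $\|V_{\Phi_n}\| = m \leq n$. All the work lies in controlling $\|V_{\Phi_n}^{-1}\|$.

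For the inverse I invoke the Lagrange-interpolation representation of an inverse Vandermonde matrix. Writing $\zeta_1, \dots, \zeta_m$ for the primitive $n$-th roots of unity, one has
\[
(V_{\Phi_n}^{-1})_{j, i} \;=\; \frac{c_{j, i}}{\Phi_n'(\zeta_i)},
\]
where $c_{j, i}$ is the coefficient of $x^{j-1}$ in $\Phi_n(x)/(x - \zeta_i)$. A polynomial long division gives $|c_{j, i}| \leq m A(n)$. To lower-bound the denominator I differentiate the factorization $x^n - 1 = \prod_{d \mid n} \Phi_d(x)$ and evaluate at $\zeta_i$; all terms vanish except the one carrying $\Phi_n'$, yielding
\[
\Phi_n'(\zeta_i) \;=\; \frac{n\,\zeta_i^{n-1}}{\prod_{d \mid n,\, d < n} \Phi_d(\zeta_i)},
\]
so that lower-bounding $|\Phi_n'(\zeta_i)|$ reduces to upper-bounding the finite product on the right.

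To estimate this product I use the classical identity $\Phi_d(x) = \Phi_{\mathrm{rad}(d)}(x^{d/\mathrm{rad}(d)})$, which forces $A(d) = A(\mathrm{rad}(d))$ and reduces the analysis to the squarefree divisors of $\mathrm{rad}(n)$---of which there are exactly $2^k$, the source of the exponent $2^k$ appearing in the theorem. Grouping the divisors $d\mid n$ by their radical, using the trivial pointwise estimate $|\Phi_s(\zeta_i)| \leq (\phi(s) + 1) A(s) \leq 2n\, A(n)$ for each proper squarefree $s \mid \mathrm{rad}(n)$, and carefully accounting for the number of $d\mid n$ lying in each radical class (which contributes the additional factor $n^k$), one combines all of this with the numerator bound on $c_{j,i}$ and with the general inequality $\|V_{\Phi_n}^{-1}\| \leq m \cdot \max_{i, j} |(V_{\Phi_n}^{-1})_{j, i}|$ to obtain $\mathrm{Cond}(V_{\Phi_n}) \leq 2\mathrm{rad}(n)\, n^{2^k + k + 2}\, A(n)$, as required.

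The main obstacle is the combinatorial collapse of the product $\prod_{d \mid n,\, d < n} A(d)$ into a single factor $A(n)$ in the final estimate. A naive bound $A(d) \leq A(n)$ applied entrywise would introduce an unwanted factor of $A(n)^{2^k - 1}$, so the argument requires either a sharpening of the pointwise estimate on $|\Phi_s(\zeta_i)|$ beyond the crude triangle-inequality bound, or an en-bloc treatment of $\prod_{d \mid n,\, d < n} \Phi_d(x) = (x^n - 1)/\Phi_n(x)$ that avoids splitting the product factor by factor. Matching the precise exponents $2^k + k + 2$ on $n$ and only the first power on each of $\mathrm{rad}(n)$ and $A(n)$ is where the most delicate bookkeeping lies, and is the step for which the reduction to squarefree radical and the partition of divisors into radical classes is essential.
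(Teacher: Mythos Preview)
This theorem is not proved in the present paper; it is quoted from \cite{blanco1}, Thm.~3.10, and used only as a benchmark against which the sharper bounds of Proposition~\ref{upto7} are compared. The Appendix proves those refinements by taking as a black box the estimate $|w_{ij}|\le\rad(n)\,(A(n)+1)/|\Phi'_{\rad(n)}(\zeta_j^{\,n/\rad(n)})|$ from \cite{blanco1}, Prop.~3.7, and then lower-bounding $|\Phi'_{\rad(n)}|$ case by case via the recursion $\Phi_N(x)=\Phi_{N/p}(x^{p})/\Phi_{N/p}(x)$ (see Lemma~\ref{dens}). So there is no in-paper proof to match against.

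That said, your sketch is on the right track but contains the gap you yourself flag. The inequality ``$A(s)\le A(n)$'' for proper $s\mid\rad(n)$ is not a theorem --- coefficient heights of cyclotomic polynomials are not monotone under divisibility --- and even granting it, applying it factorwise would yield $A(n)^{2^k-1}$ rather than a single $A(n)$. The fix is not to collapse those factors into $A(n)$ at all: the unique power of $A(n)$ in the target bound comes \emph{solely} from your numerator estimate $|c_{j,i}|\le mA(n)$ on the coefficients of $\Phi_n(x)/(x-\zeta_i)$. The denominator $\prod_{d\mid\rad(n),\,d<\rad(n)}|\Phi_d(\zeta')|$ involves only proper divisors $d$, each with strictly fewer than $k$ prime factors, and the corresponding heights $A(d)$ are bounded in terms of the primes $p_1,\dots,p_k$ themselves (hence in terms of $n$), independently of $A(n)$. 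Those contributions are precisely what generate the exponent $2^k+k$ on $n$. In short, the ``combinatorial collapse'' you are looking for is a red herring: the $A(d)$ get absorbed into the power of $n$ and never need to be compared with $A(n)$.
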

In particular, if we restrict to families of conductors (a) divisible for at most $k$ primes ($k$ fixed) and (b) if we can grant that $A(n)$ grows polynomially in $n$, then the condition number grows polynomially for this family. In particular, \cite{blanco1} gives a much more refined formula for conductors divisible by up to three primes. We have extended this formula for products of up to six primes. We give the statement next and postpone the proof for the Appendix:

\begin{prop} For $n\geq 2$ and $m:=\phi(n)$, the following upper-bounds hold for the condition number of $V_{\Phi_n}$:
\begin{itemize}
\item[a)] If $n=p^k$, then
$$\mathrm{Cond}(V_{\Phi_n})\leq 4m^2.$$
\item[b)] If $n=p^lq^sr^t$ with $l,s,t\geq 0$, denoting by $\varepsilon$ the number of primes diving $n$ with positive power, then 
$$\mathrm{Cond}(V_{\Phi_n})\leq 4\phi(\rad(n))^{\varepsilon-1}m^2.$$
\item[c)] If $n=p^aq^br^cs^d$, then
$$\mathrm{Cond}(V_{\Phi_n})\leq 4 \phi(\rad(n))^4 m^2.$$
\item[d)] If $n=p^aq^br^cs^dt^e$ and  $m = \phi(n)$, then
$$\mathrm{Cond}(V_{\Phi_n} ) \leq 4\phi(\rad(n))^7m^2.$$
\item[e)] If $n =p^aq^br^cs^dt^eu^f$, then
$$\mathrm{Cond}(V_{\Phi_n} ) \leq 4\phi(\rad(n))^{11}m^2.$$
\end{itemize}
\label{upto7}
\end{prop}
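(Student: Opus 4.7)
My plan is to proceed by induction on $\varepsilon=\omega(n)$, using the Kronecker-product decomposition of the canonical embedding over linearly disjoint cyclotomic subfields together with the explicit prime-power bound of Thm.~\ref{its}. Part (a) is immediate: Thm.~\ref{its} gives $\mathrm{Cond}(V_{\Phi_{p^k}})=\phi(p^k)\sqrt{2(1-1/p)}\le \sqrt{2}\,m\le 4m^2$ for every $m\ge 1$.

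For the inductive step, write $n=p^a n'$ with $\gcd(p,n')=1$ and $\omega(n')=\varepsilon-1$, so that $K_{p^a}$ and $K_{n'}$ are linearly disjoint and $\mathcal{O}_{K_n}$ admits both the usual power basis $\{\zeta_n^k\}_{k=0}^{m-1}$ and the product basis $\{\zeta_{p^a}^i\zeta_{n'}^j\}_{0\le i<\phi(p^a),\,0\le j<\phi(n')}$. The Vandermonde in the product basis is, up to a permutation of rows (harmless for the Frobenius norm), the Kronecker product $V_{\Phi_{p^a}}\otimes V_{\Phi_{n'}}$; denoting by $C$ the change-of-basis matrix from the power basis to the product basis, one obtains
\[
V_{\Phi_n}=(V_{\Phi_{p^a}}\otimes V_{\Phi_{n'}})\cdot C.
\]
Applying Cor.~\ref{condkron} and combining Thm.~\ref{its} with the induction hypothesis,
\[
\mathrm{Cond}(V_{\Phi_n})\le \mathrm{Cond}(V_{\Phi_{p^a}})\cdot\mathrm{Cond}(V_{\Phi_{n'}})\cdot\mathrm{Cond}(C),
\]
so the statement reduces to showing that $\mathrm{Cond}(C)$ is bounded by $\phi(\rad(n))$ raised to the incremental exponent: $1$ when passing from $\varepsilon=1$ or $\varepsilon=2$ to $\varepsilon+1$, and $2$, $3$, $4$ respectively in the successive cases (c), (d), (e).

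The main obstacle is this control of $\mathrm{Cond}(C)$. Using the Bezout identity $1=up^a+vn'$, one has $\zeta_n^k=\zeta_{p^a}^{vk}\zeta_{n'}^{uk}$, so the $k$-th column of $C$ is the column-wise Kronecker (Khatri--Rao) product of the coefficient vectors of $x^{vk}\bmod \Phi_{p^a}(x)$ and $x^{uk}\bmod \Phi_{n'}(x)$. This representation yields a direct estimate of $\|C\|$ from uniform bounds on the reductions, while the entries of $C^{-1}$ are obtained via a Lagrange-interpolation argument dual to the one behind Thm.~\ref{Th:2.14}; the derivatives $\Phi_{p^a}'$ and $\Phi_{n'}'$ evaluated at primitive roots contribute the $\phi(\rad(n))$ factors.

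The delicate point is that for $\varepsilon\in\{2,3\}$ a joint estimate combining both sides of the Khatri--Rao product yields increment $1$ only, whereas for $\varepsilon\ge 4$ the cross-interactions between the $\varepsilon-1$ primes already present in $n'$ and the freshly adjoined prime $p$ force the larger increments $2$, $3$, $4$ appearing in parts (c), (d), (e). Verifying these increments case-by-case, and checking that the interplay between $\|C\|$ and $\|C^{-1}\|$ does not destroy the $4m^2$ prefactor at any stage of the induction, is the main computation, which is the step I expect to be most laborious and which is therefore deferred to the Appendix.
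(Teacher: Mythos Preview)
Your approach is genuinely different from the paper's, and the key step is missing rather than merely deferred.

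The paper does \emph{not} use a Kronecker decomposition for this proposition; that tool is reserved for the twisted basis (Thm.~\ref{Th:2.16}). For the usual power basis the paper works directly with the entries of $V_{\Phi_n}^{-1}$: using the Lagrange-interpolation formula $w_{ij}=(-1)^{m-i}e_{m-i}(\bar\zeta_j)/\prod_{k\neq j}(\zeta_j-\zeta_k)$, it combines (i) the bound $|w_{ij}|\le \rad(n)\,(A(n)+1)/|\Phi'_{\rad(n)}(\zeta_j^{n/\rad(n)})|$ from \cite{blanco1}, (ii) an explicit lower bound on $|\Phi'_{\rad(n)}(\zeta)|$, and (iii) the recent upper bounds of \cite{Bounds56primes} on $A(n)$, the height of $\Phi_n$. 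The exponents $\varepsilon-1,4,7,11$ are \emph{forced} by the shape of these $A(n)$ bounds for $\omega(n)=2,3,4,5,6$.

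Your reduction $V_{\Phi_n}=(V_{\Phi_{p^a}}\otimes V_{\Phi_{n'}})\,C$ is correct, but it merely relocates the difficulty into $\mathrm{Cond}(C)$, and you do not explain how to bound it. Concretely: the columns of $C^{-1}$ are the coefficient vectors of $\zeta_{p^a}^i\zeta_{n'}^j=\zeta_n^{in'+jp^a}$ in the power basis, i.e.\ reductions of $x^{in'+jp^a}\bmod\Phi_n(x)$; controlling these reductions is exactly the problem of bounding $A(n)$. Similarly, the columns of $C$ involve reductions modulo $\Phi_{n'}$, hence $A(n')$. Your sketch never mentions $A(n)$, and the ``Lagrange-interpolation argument dual to Thm.~\ref{Th:2.14}'' does not apply to $C$: that matrix is a change of basis inside a single ring, not an evaluation map at roots, so there are no derivatives $\Phi'_{p^a}$, $\Phi'_{n'}$ to invoke. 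Without the height bounds of \cite{Bounds56primes} (or an equivalent input), there is no mechanism in your argument that would produce the specific increments $1,1,2,3,4$; these are not a feature of Kronecker structure but of how fast cyclotomic coefficients grow with $\omega(n)$.
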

However, things are much easier if we replace the usual power basis by the twisted basis. In this case, we are changing, first, the usual coordinate embedding by the twisted coordinate embedding and second, even if the image of the transformation is again the canonical embedding of the ring of integers, the basis is again different. Even so, as we pointed out, for many applications it is preferable to work under the twisted canonical embedding instead of the usual canonical embedding. In particular, we have:
\begin{thm}\label{Th:2.16}For $n=p_1^{k_1}...p_r^{k_r}$ we have
$$
\mathrm{Cond}(TV_{K_n})=\phi(n)\sqrt{2}^{r}\sqrt{\prod_{i=1}^r\left(1-\frac{1}{p_i}\right)}.
$$
\end{thm}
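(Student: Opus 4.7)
The plan is to reduce the general case to the prime-power case via the Kronecker product structure of $TV_{K_n}$ and the multiplicativity of the condition number under Kronecker products (Cor.~\ref{condkron}). Theorem~\ref{its} already gives an exact formula for the prime-power condition number, so the only real work is organising an induction on $\omega(n) = r$.

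First I would set up the induction on $r$. For the base case $r = 1$ we have $n = p_1^{k_1}$ and $TV_{K_n}$ is just the ordinary Vandermonde matrix $V_{K_n}$, so Thm.~\ref{its} yields $\mathrm{Cond}(TV_{K_n}) = \phi(n)\sqrt{2(1-1/p_1)}$, which matches the claim. For the inductive step, write $n = p_1^{k_1}\cdot n'$ with $n' = p_2^{k_2}\cdots p_r^{k_r}$, so that $\gcd(p_1^{k_1},n') = 1$ and hence the extensions $K_{p_1^{k_1}}/\mathbb{Q}$ and $K_{n'}/\mathbb{Q}$ are linearly disjoint. By the defining formula $TV_{K_{ab}} = V_{K_a}\otimes V_{K_b}$ for coprime $a,b$, together with the associativity of the Kronecker product, an easy sub-induction gives
\begin{equation*}
TV_{K_n} \;=\; V_{K_{p_1^{k_1}}} \otimes V_{K_{p_2^{k_2}}} \otimes \cdots \otimes V_{K_{p_r^{k_r}}}.
\end{equation*}

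Next I would apply Cor.~\ref{condkron} inductively (or directly to the two-factor decomposition $TV_{K_n} = V_{K_{p_1^{k_1}}}\otimes TV_{K_{n'}}$) to obtain
\begin{equation*}
\mathrm{Cond}(TV_{K_n}) \;=\; \prod_{i=1}^r \mathrm{Cond}\bigl(V_{K_{p_i^{k_i}}}\bigr).
\end{equation*}
Substituting Thm.~\ref{its} into each factor gives
\begin{equation*}
\mathrm{Cond}(TV_{K_n}) \;=\; \prod_{i=1}^r \phi(p_i^{k_i})\sqrt{2\left(1-\tfrac{1}{p_i}\right)}.
\end{equation*}
Finally I would invoke the multiplicativity of Euler's totient on pairwise coprime arguments, $\prod_{i=1}^r \phi(p_i^{k_i}) = \phi(n)$, and separate the square roots as $\prod_{i=1}^r \sqrt{2(1-1/p_i)} = \sqrt{2}^{\,r}\sqrt{\prod_{i=1}^r(1-1/p_i)}$, yielding the stated formula.

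There is no real obstacle here; the only delicate point is justifying the iterated Kronecker product decomposition of $TV_{K_n}$, which requires checking that the ordered twisted basis $\{\zeta_{p_1^{k_1}}^{i_1}\cdots \zeta_{p_r^{k_r}}^{i_r}\}$ of $\mathcal{O}_{K_n}$ corresponds, under any fixed ordering compatible with the tensor factorisation, to the Kronecker product of the individual canonical-embedding change-of-basis matrices. Since both the Frobenius norm and the condition number are invariant under simultaneous permutations of rows and columns (i.e.\ conjugation by permutation matrices), any ordering choice yields the same value, so this bookkeeping does not affect the final computation.
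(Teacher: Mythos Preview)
Your proposal is correct and follows essentially the same approach as the paper: decompose $TV_{K_n}$ as the iterated Kronecker product $V_{K_{p_1^{k_1}}}\otimes\cdots\otimes V_{K_{p_r^{k_r}}}$ via linear disjointness, apply Cor.~\ref{condkron} to factor the condition number, and then plug in Thm.~\ref{its} for each prime-power factor. The paper's version is just terser, stating the Kronecker decomposition and the product formula directly rather than framing them as an induction on $r$; your added remark about the ordering bookkeeping and permutation invariance is a harmless elaboration that the paper simply omits.
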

\begin{proof}Since the $K_{p_i^{r_i}}$ are linearly disjoint, from Equation \ref{cyclopro} we have
$$
TV_{K_n}=V_{p_1^{r_1}}\otimes\cdots\otimes V_{p_k^{r_k}},
$$
From Corollary \ref{condkron} we have
$$
\mathrm{Cond}(TV_{K_n})=\mathrm{Cond}(V_{p_1^{r_1}})\cdots \mathrm{Cond}(V_{p_k^{r_k}}).
$$
The result now follows from Theorem \ref{its}.
\end{proof}
Contrariwise to the case of Theorem~\ref{Th:2.14}, where we needed to impose a constant number of primes dividing the conductor, we can see by Theorem~\ref{Th:2.16} that, if we work with PLWE under the twisted basis, the condition number grows polynomially even for the case of conductors divisible by a growing number of primes. Consequently, choices of $n$ where $r = \mathcal{O}(\log{n})$ give still the equivalence of RLWE/PLWE under the twisted basis.

As far as we currently know, we could only find in~\cite{HS20} some related numerical bounds for both power and twisted basis. Their results focus on the infinity norm of the linear transformation from RLWE to PLWE, for which they provide some empirical results for the power basis with up to $5$ primes dividing $n$. For the aim of exposition of our results, we have compared in several figures all the expressions and upper bounds for the condition number from Theorem~\ref{Th:2.14}, Proposition~\ref{upto7} and Theorem~\ref{Th:2.16}.\footnote{Note that, for Theorem~\ref{Th:2.14}, we represent the upper bound for $\mathrm{Cond}(V_{\Phi_n})$ divided by $A(n)$, \emph{i.e.} $\mathrm{Cond}(V_{\Phi_n})/A(n)$. It is actually a lower bound of the expression given for $\mathrm{Cond}(V_{\Phi_n})$ in that theorem.}

Figures~\ref{fig1}, \ref{fig3}, \ref{fig4}, \ref{fig5}, \ref{fig6}, \ref{fig7} compare all provided expressions,\footnote{The used code is publicly available at \url{https://github.com/apedrouzoulloa/cyclomultiquadratic}.} but each figure specifically considers a different number of primes dividing the conductor: Figure~\ref{fig1} considers $1$ prime, Figure~\ref{fig3} considers $2$ primes, Figure~\ref{fig4} considers $3$ primes, Figure~\ref{fig5} considers $4$ primes, Figure~\ref{fig6} considers $5$ primes and Figure~\ref{fig7} considers $6$ primes. Then, Figure~\ref{fig2} represents the particular case of $n = 2^lp^d$ by comparing the expressions from Theorem~\ref{Th:2.14}, Proposition~\ref{upto7} and Theorem~\ref{Th:2.16} with the closed formula from~\ref{its}. Finally, Figure~\ref{fig8} compares the condition number without any restriction for $n$, by considering the expressions for power (from Theorem~\ref{Th:2.14}) and twisted basis (from Theorem.~\ref{Th:2.16}). 

Note that all figures are log-log plots, so polynomial functions appear approximately as linear functions in which the slope is equal to the maximum degree of the polynomial. Then, it is easy to see how, for Figures~\ref{fig1}, \ref{fig2} \ref{fig3}, \ref{fig4}, \ref{fig5}, \ref{fig6}, \ref{fig7}, in which the number of primes dividing $n$ is keep constant, the refined upper bounds from Proposition~\ref{upto7} grow much slower than the ones from Theorem~\ref{Th:2.14}. In all these cases, the analogous expression for the twisted basis given in Theorem~\ref{Th:2.16} is the slowest, by growing approximately linear in $n$.

In general, this difference among expressions increases when we have a higher number of different primes dividing $n$, and finally, it becomes more evident in Figure~\ref{fig8}, where we do not fix the number of primes. For this more general case, the upper bound for the usual power basis from Theorem~\ref{Th:2.14} presents a double exponential in the number of primes, and hence, it grows considerably faster than the expression for the twisted basis given in Theorem~\ref{Th:2.16}.

\begin{figure*}[!t]
\begin{center}
\subfloat[$n = p^l$.\label{fig1}]{\includegraphics[width=0.35\linewidth]{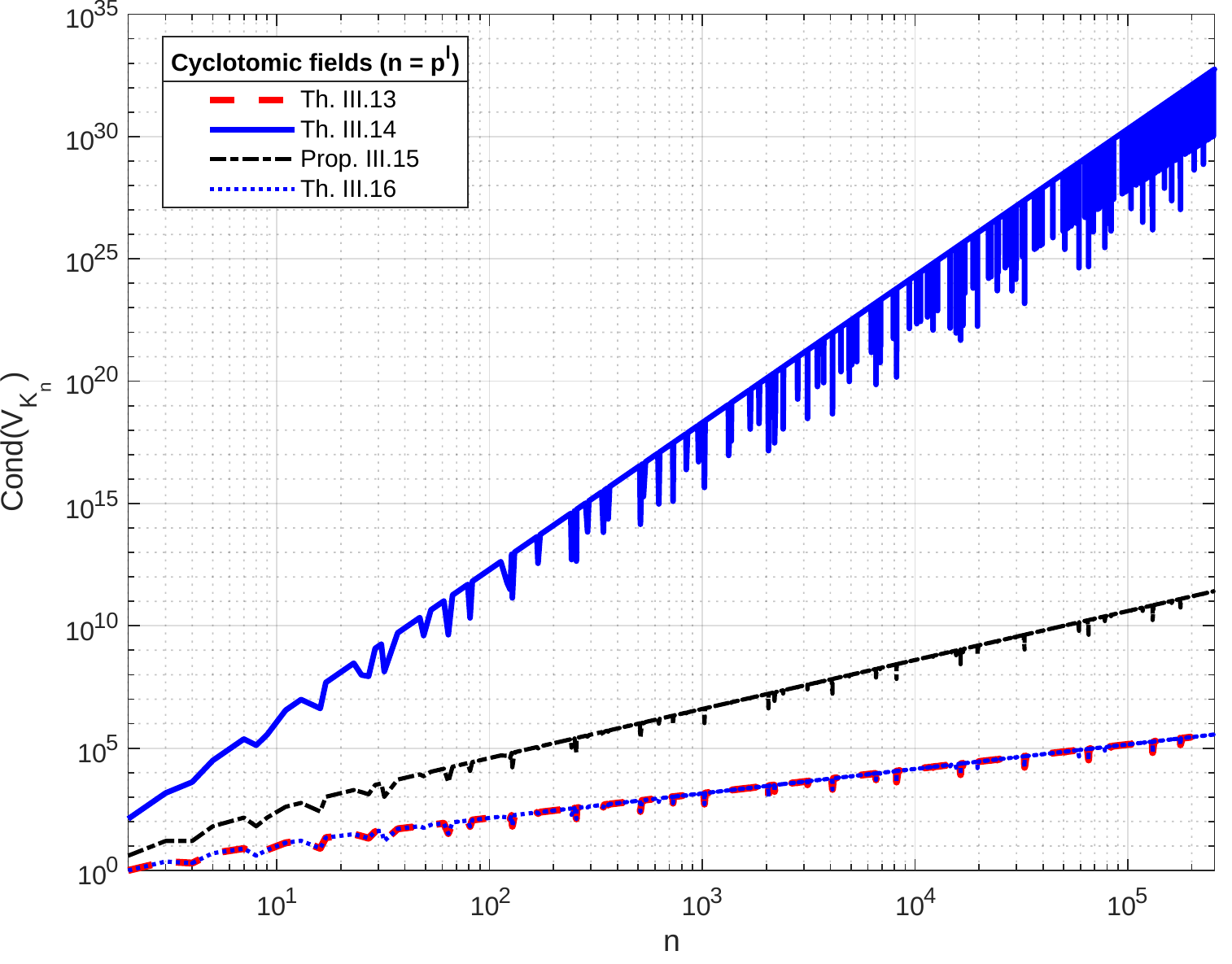}
} \hspace{3cm}
\subfloat[$n = 2^lp^d$.\label{fig2}]{\includegraphics[width=0.35\linewidth]{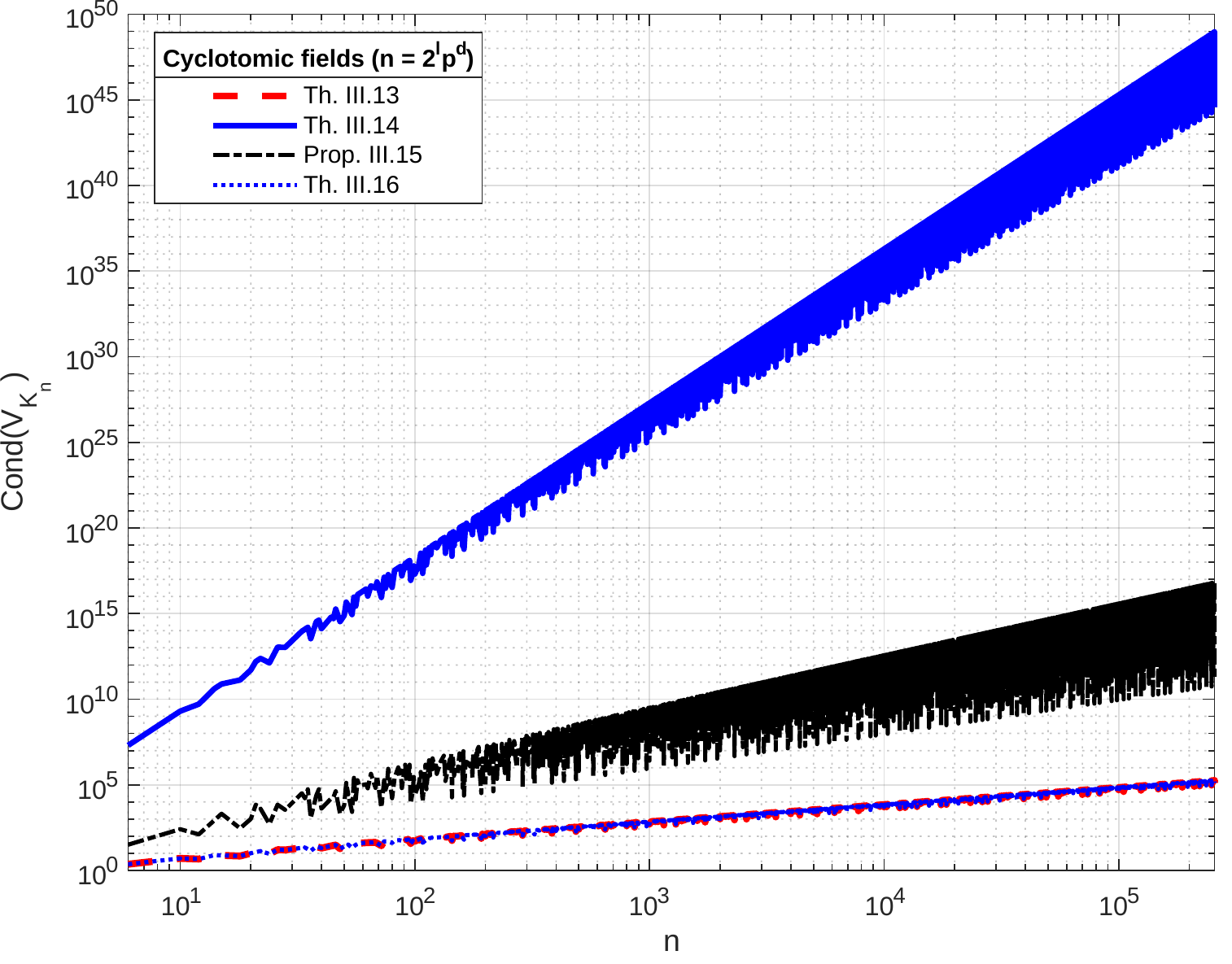}
} \\
\subfloat[$n = p^lq^s$.\label{fig3}]{\includegraphics[width=0.35\linewidth]{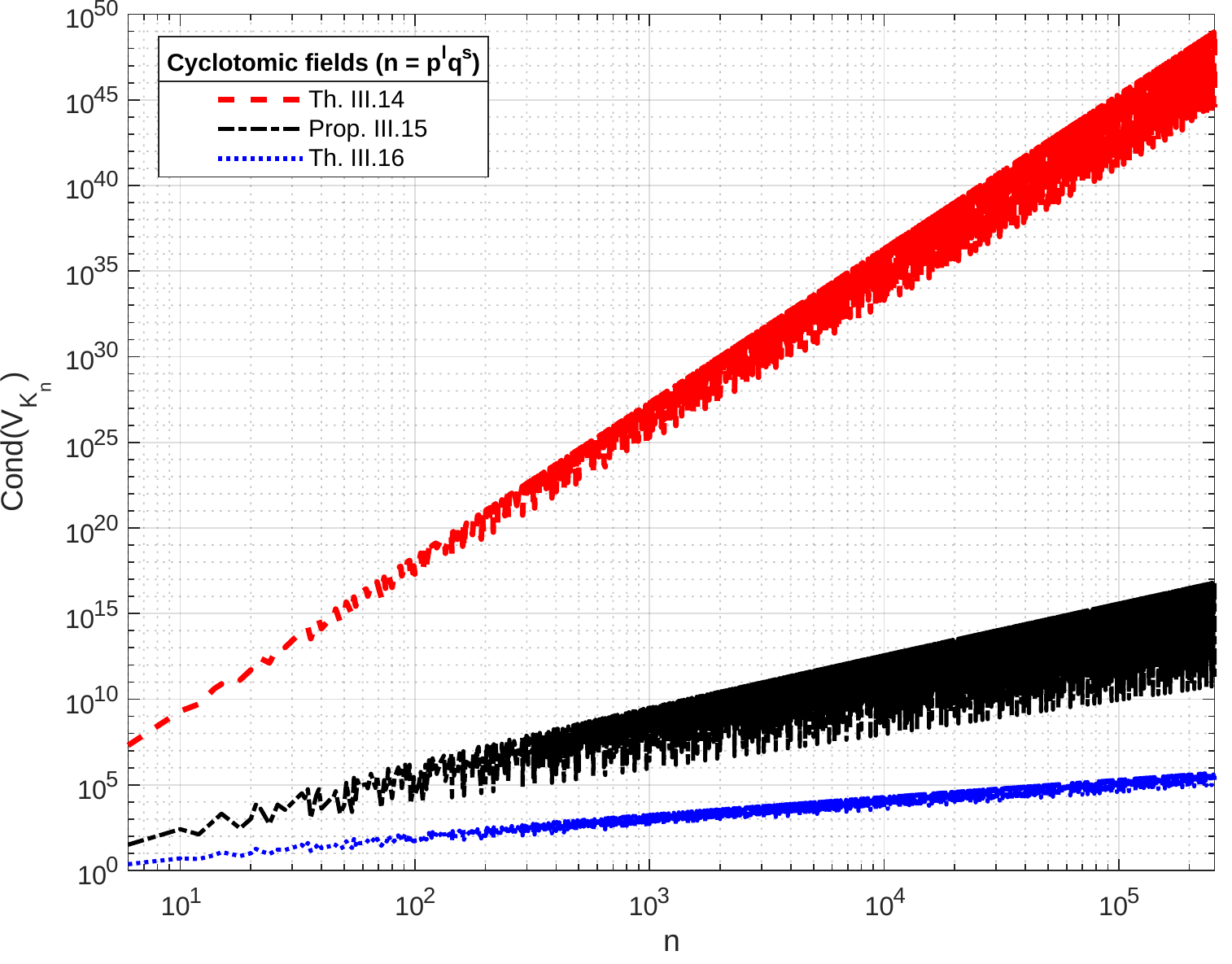}
} \hspace{3cm}
\subfloat[$n = p^{l}q^{s}r^{t}$.\label{fig4}]{\includegraphics[width=0.35\linewidth]{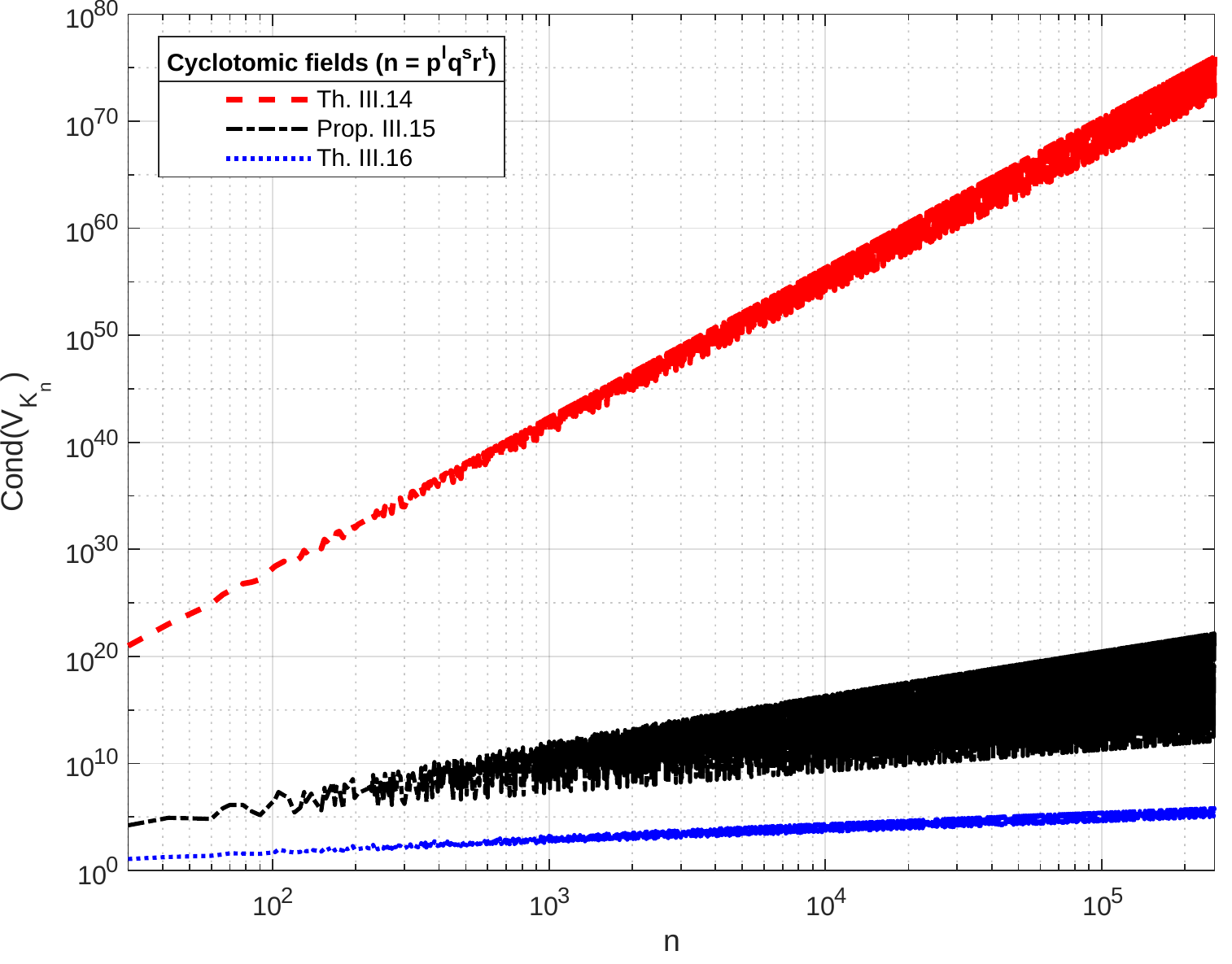}
} \\
\subfloat[$n = p^{a}q^{b}r^{c}s^{d}$.\label{fig5}]{\includegraphics[width=0.35\linewidth]{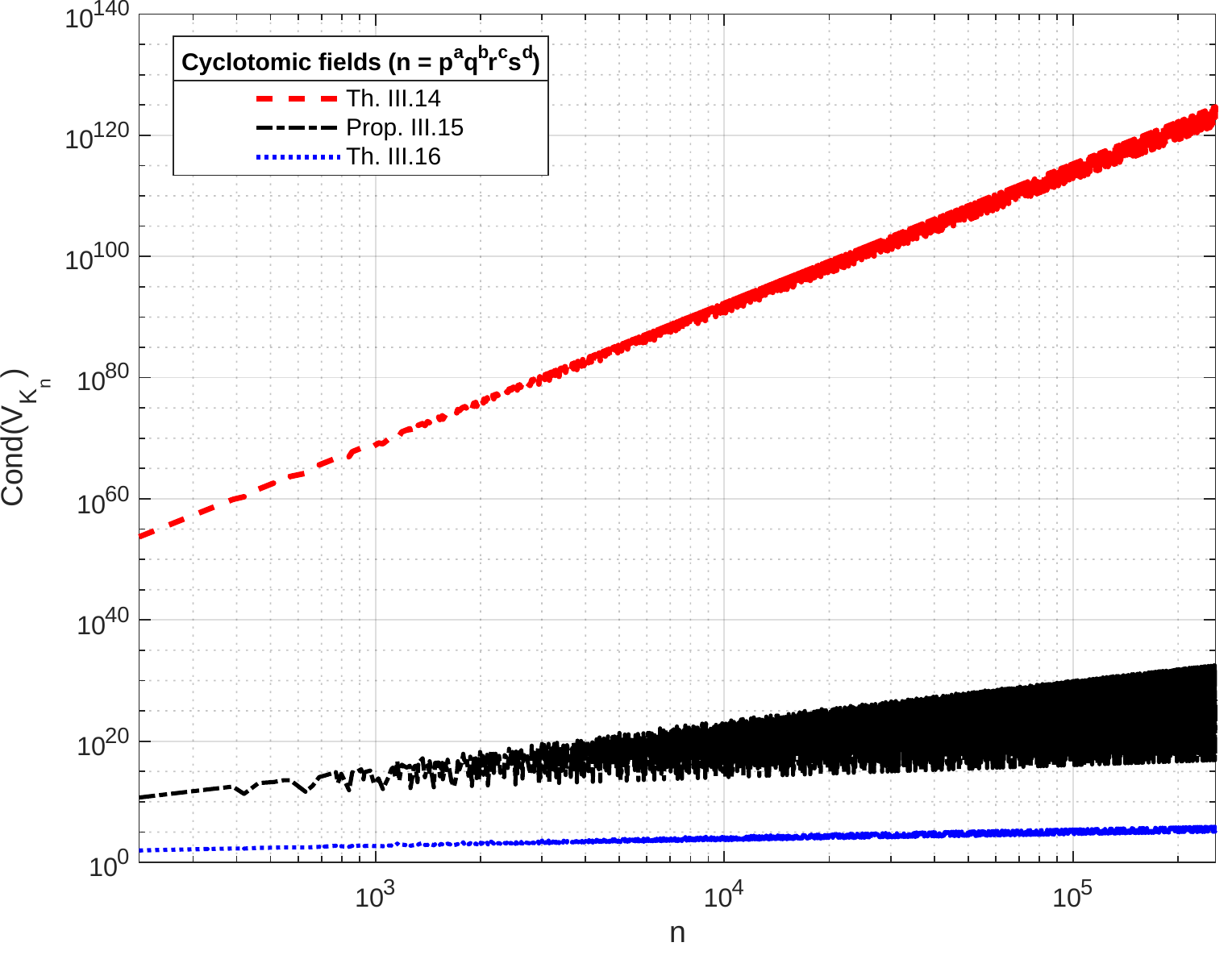}
} \hspace{3cm}
\subfloat[$n = p^{a}q^{b}r^{c}s^{d}t^{e}$.\label{fig6}]{\includegraphics[width=0.35\linewidth]{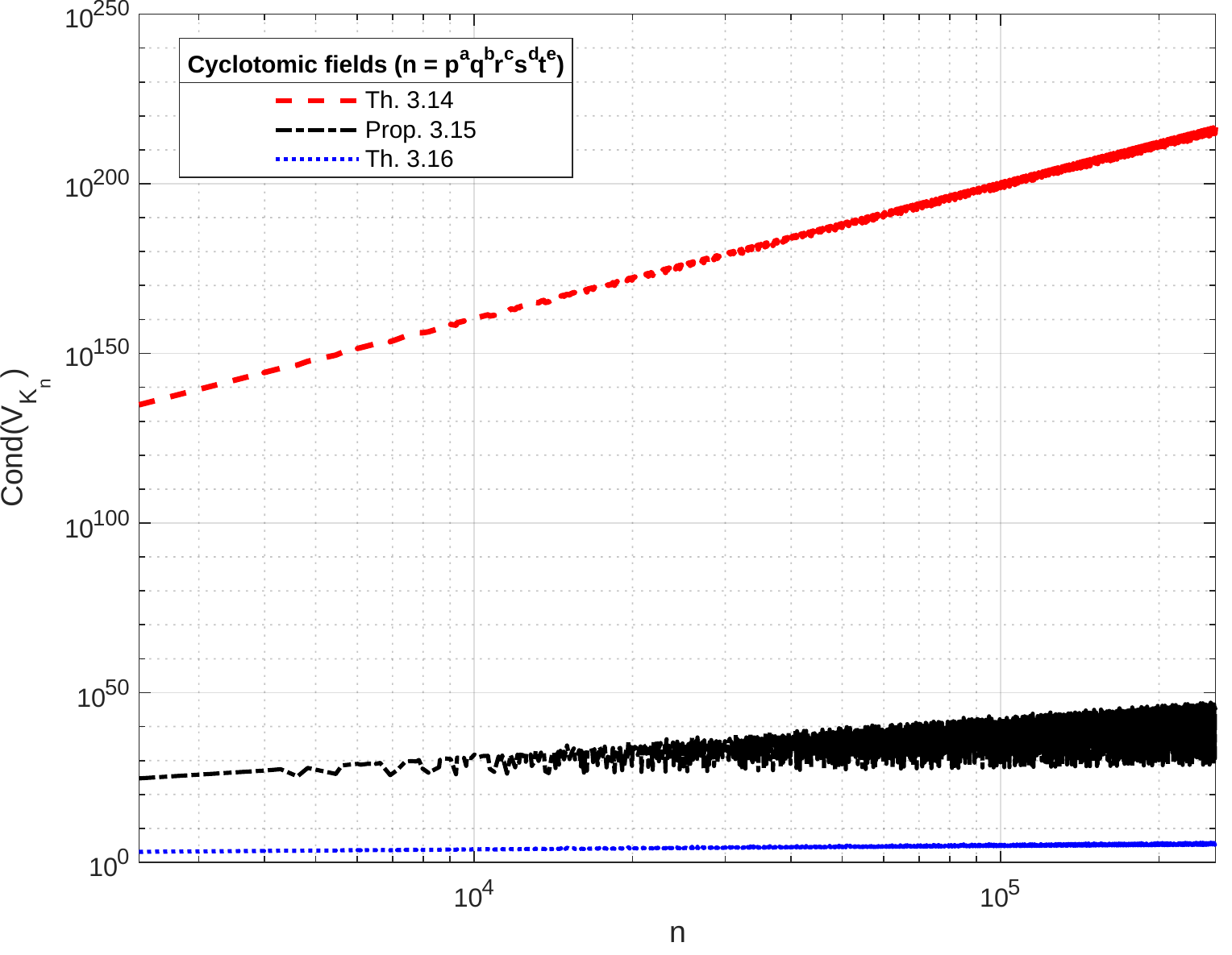}
} \\
\subfloat[$n = p^{a}q^{b}r^{c}s^{d}t^{e}u^{f}$.\label{fig7}]{\includegraphics[width=0.35\linewidth]{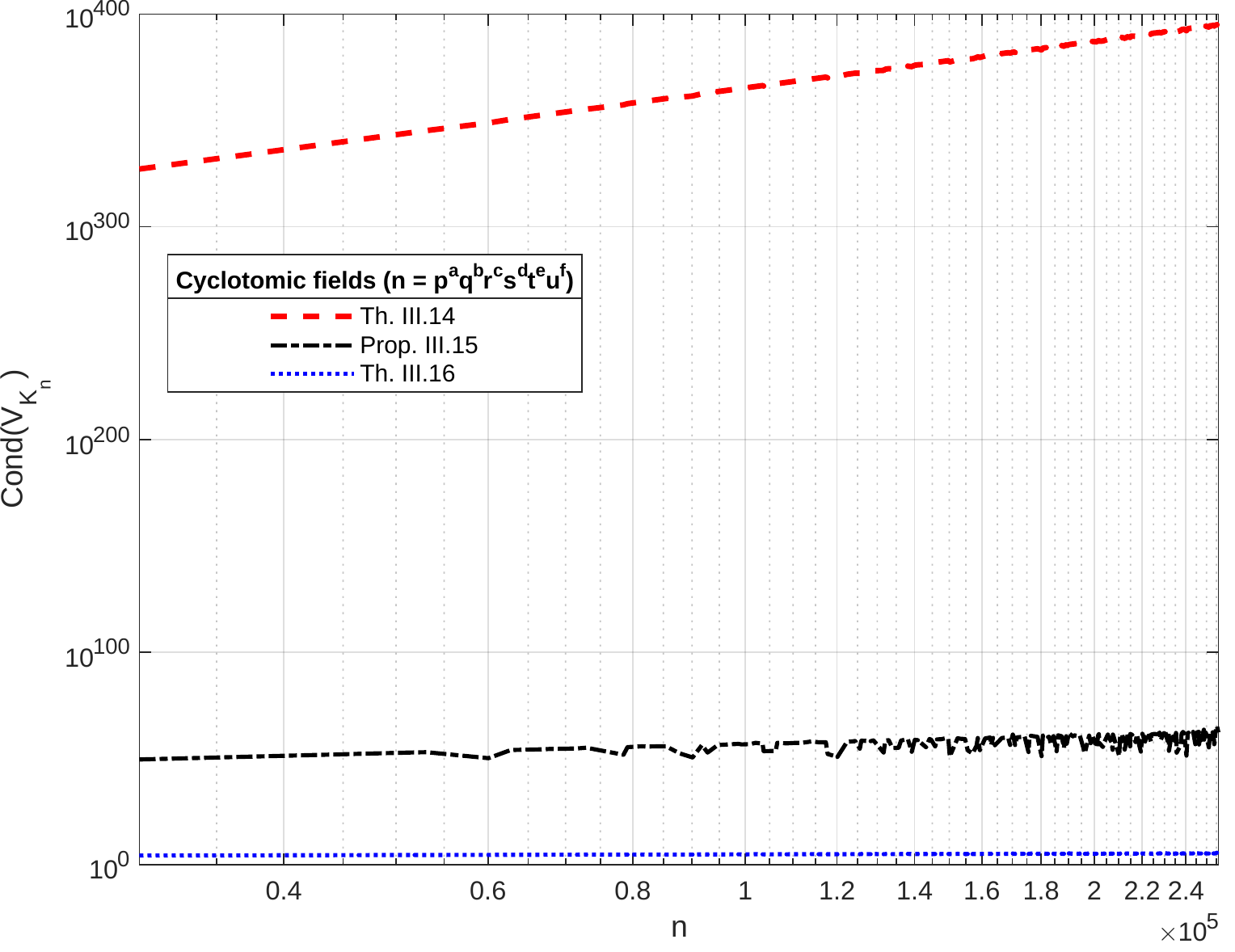}
} \hspace{3cm}
\subfloat[Any form for $n$.\label{fig8}]{\includegraphics[width=0.35\linewidth]{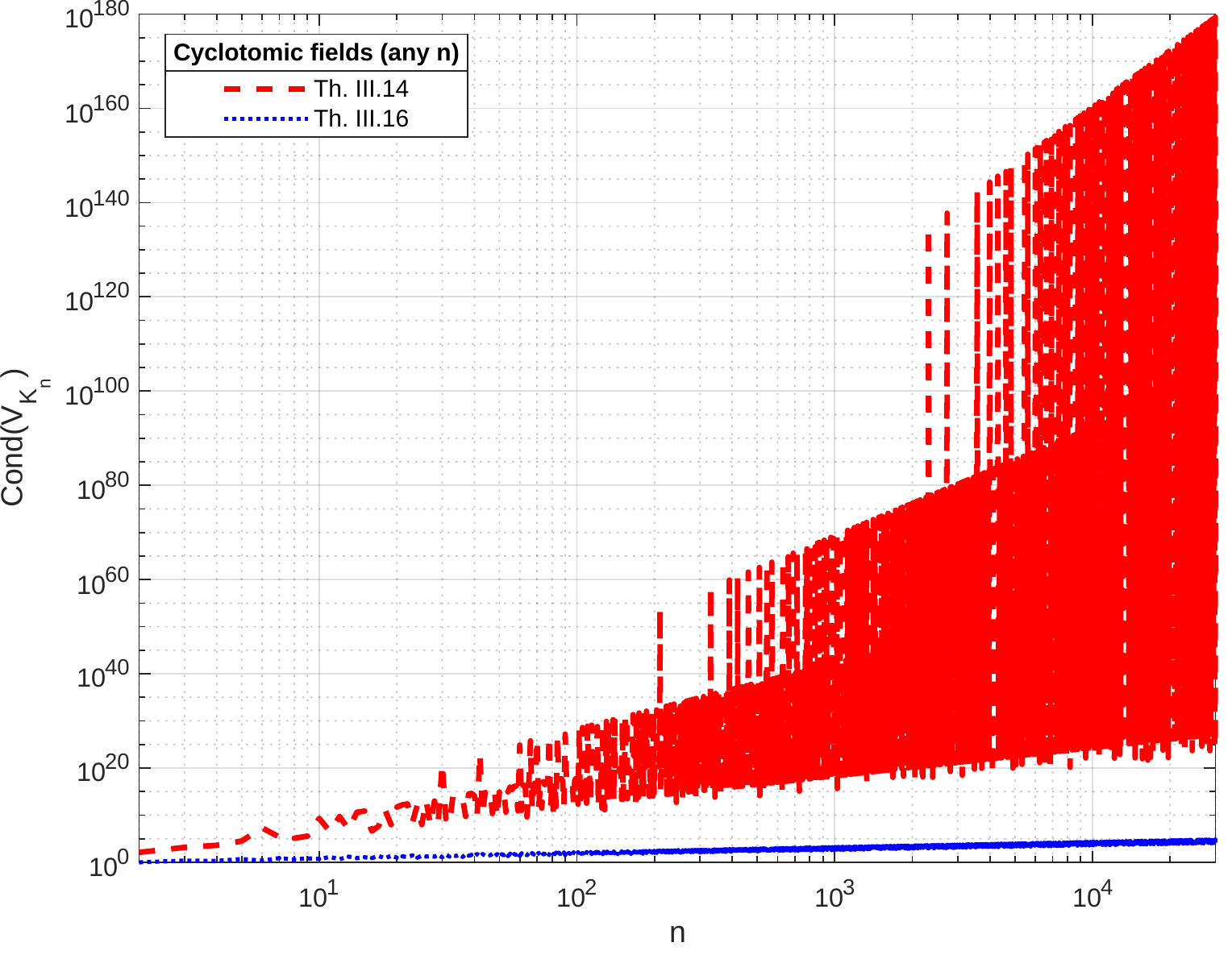}
} \\

\end{center}
	\caption{Condition number of cyclotomic fields in terms of (1) different forms for $n$, and (2) use of conventional power basis or twisted basis.}
\label{fig:ConditionNumber}
\end{figure*}

\section{Cyclo-multiquadratic fields}
\label{sec:multiquadratics}
A multiquadratic field is a number field of the form $\mathbb{Q}(\sqrt{d_1},...,\sqrt{d_r})$ with $d_i\in\mathbb{Z}$ square-free. In this section we will deal with totally real multiquadratic fields, namely, $d_i\geq 2$ for $1\leq i\leq r$. We are interested in the interplay between cyclotomic and totally real multiquadratic fields. In particular, let $n\geq 2$ be fixed and let us take $r$ different primes $p_1, p_2,...,p_r$ such that $p_i\nmid n$ for each $i$. Denote $K:=K_n(\sqrt{p_1},...,\sqrt{p_r})$.

\begin{prop}For the extensions $K_n/\mathbb{Q}$ and $K/\mathbb{Q}$, we have
\begin{align*}
\begin{split}
\mathrm{Gal}(K/\mathbb{Q})\cong \; &  \mathrm{Gal}(K_n/\mathbb{Q})\times \mathrm{Gal}(\mathbb{Q}(\sqrt{p_1})/\mathbb{Q})\times \ldots \\ 
&   \;\;\;\;\;\;\;\;\;\;\;\;\;\;\;\;\;\;\;\;\;\;\;\;\;\;\;\;\;\;\;\;\; \ldots \times \mathrm{Gal}(\mathbb{Q}(\sqrt{p_r})/\mathbb{Q}).
\end{split}
\end{align*}
\label{galoistensor}
\end{prop}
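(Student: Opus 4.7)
The natural strategy is to iterate the basic fact recorded in Equation~\ref{galoisprod}: if two Galois extensions of $\mathbb{Q}$ are linearly disjoint, the Galois group of their compositum is the direct product of the two Galois groups. Writing $M:=\mathbb{Q}(\sqrt{p_1},\ldots,\sqrt{p_r})$ and $M_j:=\mathbb{Q}(\sqrt{p_1},\ldots,\sqrt{p_j})$, the claim reduces to proving (i) $K_n\cap M=\mathbb{Q}$, and (ii) $M_{j-1}\cap\mathbb{Q}(\sqrt{p_j})=\mathbb{Q}$ for each $j=1,\ldots,r$. Since every field in sight is Galois over $\mathbb{Q}$, linear disjointness is equivalent to trivial intersection, so the whole proof amounts to controlling quadratic subfields of multiquadratic fields.

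For (i) I would use a ramification argument. Any nontrivial element of $K_n\cap M$ forces $K_n$ to contain a quadratic subfield of $M$, and the quadratic subfields of a multiquadratic field $\mathbb{Q}(\sqrt{p_1},\ldots,\sqrt{p_r})$ are exactly the $\mathbb{Q}(\sqrt{d})$ with $d=\prod_{i\in S}p_i$ for some non-empty $S\subseteq\{1,\ldots,r\}$. If an inclusion $\mathbb{Q}(\sqrt{d})\subseteq K_n$ held, every prime dividing $d$ would ramify in $\mathbb{Q}(\sqrt{d})$ (the discriminant is $d$ or $4d$, hence divisible by every prime factor of $d$, including $2$ when $d$ is even), and therefore would ramify in $K_n$, and so would divide the conductor of $K_n$ and hence $n$ itself. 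This contradicts the hypothesis $p_i\nmid n$ for every $i\in S$.

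For (ii) I would proceed by induction on $j$, showing $M_{j-1}\cap\mathbb{Q}(\sqrt{p_j})=\mathbb{Q}$ at each step: the intersection is either $\mathbb{Q}$ or $\mathbb{Q}(\sqrt{p_j})$, and the second case is ruled out because the quadratic subfields of $M_{j-1}$ are precisely the $\mathbb{Q}(\sqrt{\prod_{i\in S}p_i})$ with $\emptyset\neq S\subseteq\{1,\ldots,j-1\}$, none of which can coincide with $\mathbb{Q}(\sqrt{p_j})$ since the $p_i$ are pairwise distinct primes. A single application of Equation~\ref{galoisprod} then splits off $\mathrm{Gal}(K_n/\mathbb{Q})$ from $\mathrm{Gal}(M/\mathbb{Q})$, and $r-1$ further applications decompose $\mathrm{Gal}(M/\mathbb{Q})$ as $\prod_{i=1}^r\mathrm{Gal}(\mathbb{Q}(\sqrt{p_i})/\mathbb{Q})$, yielding the stated isomorphism.

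The main subtlety is the ramification bookkeeping when the prime $2$ appears among the $p_i$ (which forces $n$ odd by the hypothesis) and when some product $d=\prod_{i\in S}p_i$ satisfies $d\not\equiv 1\pmod{4}$, so that an extra factor of $2$ enters the discriminant of $\mathbb{Q}(\sqrt{d})$. In every case, however, each prime actually dividing $d$ still ramifies in $\mathbb{Q}(\sqrt{d})$, which is precisely what both (i) and (ii) require; once this is set up uniformly, the product structure follows mechanically from iterated application of Equation~\ref{galoisprod}.
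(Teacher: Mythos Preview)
Your proposal is correct and follows essentially the same strategy as the paper: use ramification to establish linear disjointness and then iterate Equation~\ref{galoisprod}. The only organizational difference is that the paper adjoins the $\sqrt{p_i}$ one at a time to $K_n$ (so at step $j$ one merely needs that $p_j$ ramifies in $\mathbb{Q}(\sqrt{p_j})$ but is unramified in the compositum already built), whereas you first split $K_n$ from the whole multiquadratic field $M$ and then decompose $M$, which obliges you to classify the quadratic subfields of $M$---an extra ingredient the paper's ordering sidesteps.
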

\begin{proof}First, we observe that the extensions $K_n/\mathbb{Q}$ and $\mathbb{Q}(\sqrt{p_1})/\mathbb{Q}$ are linearly disjoint: otherwise it would be $K_n\cap \mathbb{Q}(\sqrt{p_1})=\mathbb{Q}(\sqrt{p_1})$ hence $\mathbb{Q}(\sqrt{p_1})\subseteq K_n$. In that case, the prime $p_1$, which ramifies in $\mathbb{Q}(\sqrt{p_1})$, would ramify in $K_n$, which is a contradiction since $p_1\nmid n$. The same argument applies to the extensions $K_n\mathbb{Q}(\sqrt{p_1})/\mathbb{Q}(\sqrt{p_1})$ and $\mathbb{Q}(\sqrt{p_1},\sqrt{p_2})/\mathbb{Q}(\sqrt{p_1})$ to show that they are linearly disjoint, and the statement for arbitrary $r>1$ follows by induction. Finally, by using Eq. \ref{galoisprod} the result holds.
\end{proof}

To alleviate notation, from now on, and unless stated otherwise, by the notation $\otimes$ we will understand the usual tensor product $\otimes_{\mathbb{Z}}$. Denote, as in the previous section, $\mathcal{O}_n=\mathbb{Z}[x]/(\Phi_n(x))$ and set $\mathcal{O}_{\sqrt{p_i}}:=\mathbb{Z}[x]/(q_i(x))$, where $q_i(x)$ is the minimal polynomial of $\sqrt{p_i}$, namely, $q_i(x)=x^2-p_i$ if $p_i\equiv 2,3\pmod{4}$ and $q_i(x)=x^2-x+\frac{1-p_i}{4}$, the minimal polynomial of $\frac{1+\sqrt{p_i}}{2}$ otherwise. 
Denote
$$
\varepsilon_i=\left\{
\begin{array}{l}
\sqrt{p_i}\,\mbox{ if }p_i \equiv 2,3\pmod{4}\\
\\
\frac{1+\sqrt{p_i}}{2}\,\mbox{ if }p_i\equiv 1\pmod{4}.
\end{array}
\right.
$$
Setting $\mathcal{O}:=\mathbb{Z}[x]/(\Phi_n(x))\otimes\mathbb{Z}[x]/(q_1(x))\otimes\cdots\otimes\mathbb{Z}[x]/(q_r(x))$, we have:
\begin{lem}Succesive evaluations at $\zeta_n$, $\varepsilon_1$,...$\varepsilon_r$ yield an isomorphism
$$
\mathcal{O}\cong\mathcal{O}_K.
$$
\end{lem}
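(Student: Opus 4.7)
My plan is to construct the successive evaluation map and show it is an isomorphism by proving surjectivity (via a compositum-of-rings-of-integers argument) and then invoking a $\mathbb{Z}$-rank comparison to obtain injectivity for free.

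First, I would build the map $\psi: \mathcal{O} \to \mathcal{O}_K$ from the universal property of the tensor product of commutative $\mathbb{Z}$-algebras. Each tensor factor $\mathbb{Z}[x]/(\Phi_n(x))$ and $\mathbb{Z}[x]/(q_i(x))$ admits a canonical $\mathbb{Z}$-algebra morphism into $K$ sending the class of $x$ to $\zeta_n$ and to $\varepsilon_i$ respectively, since these are roots of the defining polynomials by construction. Assembling these yields a $\mathbb{Z}$-algebra homomorphism whose image is $\mathbb{Z}[\zeta_n, \varepsilon_1, \ldots, \varepsilon_r] \subseteq \mathcal{O}_K$, this being a ring of algebraic integers.

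Second, I would establish surjectivity by invoking the classical fact that if $L, M$ are number fields with $L \cap M = \mathbb{Q}$ and coprime absolute discriminants, then $\mathcal{O}_{LM} = \mathcal{O}_L \cdot \mathcal{O}_M$. Applying this inductively along the tower $K_n \subseteq K_n(\sqrt{p_1}) \subseteq \cdots \subseteq K$, with linear disjointness at every step supplied by Proposition~\ref{galoistensor}, would give $\mathcal{O}_K = \mathbb{Z}[\zeta_n, \varepsilon_1, \ldots, \varepsilon_r] = \mathrm{Im}(\psi)$. The discriminant of $K_n$ is supported on the primes dividing $n$, while that of $\mathbb{Q}(\sqrt{p_i})$ is supported on the primes dividing $2p_i$, and $p_i \nmid n$ by hypothesis, so the required coprimality holds at each step under the appropriate parity conditions. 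Here I would also recall the standard identifications $\mathcal{O}_{K_n} = \mathbb{Z}[\zeta_n]$ and $\mathcal{O}_{\mathbb{Q}(\sqrt{p_i})} = \mathbb{Z}[\varepsilon_i]$.

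Finally, to pass from surjectivity to isomorphism I would compare $\mathbb{Z}$-ranks. By Proposition~\ref{galoistensor}, $[K:\mathbb{Q}] = \phi(n)\cdot 2^r$, so $\mathcal{O}_K$ is free of rank $\phi(n)\cdot 2^r$ as a $\mathbb{Z}$-module; the same rank is manifest in $\mathcal{O}$ from the tensor-product basis $\{\overline{x}^{\,i}\otimes \overline{x}^{\,j_1}\otimes\cdots\otimes \overline{x}^{\,j_r}\}$ with $0 \leq i < \phi(n)$ and $j_k \in \{0,1\}$. A surjection between free $\mathbb{Z}$-modules of the same finite rank must have kernel of rank zero, and since $\mathcal{O}$ is torsion-free this forces the kernel to vanish. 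The main obstacle I foresee is the edge case in which $n$ is even and some $p_i\equiv 2,3\pmod 4$, so that both $\mathrm{disc}(K_n)$ and $\mathrm{disc}(\mathbb{Q}(\sqrt{p_i}))$ share the prime $2$ and the coprime-discriminants theorem does not apply verbatim; there one would have to supplement the argument with a direct computation of the discriminant of the tensor-product basis at $2$, or restrict the hypotheses on $\{p_i\}$ accordingly.
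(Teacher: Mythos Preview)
Your approach mirrors the paper's: both construct the evaluation map from the tensor product and then invoke the standard theorem that for linearly disjoint number fields with pairwise coprime discriminants, the ring of integers of the compositum equals the product of the individual rings of integers (the paper cites Narkiewicz, Thm.~4.26, for this). Your explicit rank-comparison argument for injectivity is a welcome addition; the paper leaves that step implicit.

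The edge case you flag at the end is not a minor technicality but a genuine gap, and the paper's own proof does not address it either---it simply asserts that ``the gcd of all the discriminants is $1$'', which is not guaranteed by the standing hypotheses. In fact the lemma fails without further restrictions. Take $n=4$ and $p_1=3$: then $K=\mathbb{Q}(i,\sqrt{3})=\mathbb{Q}(\zeta_{12})$, so $\mathcal{O}_K=\mathbb{Z}[\zeta_{12}]$, whereas the image of the evaluation map is $\mathbb{Z}[i,\sqrt{3}]$, which does not contain $\zeta_{12}=(\sqrt{3}+i)/2$. A similar obstruction arises purely on the multiquadratic side whenever two of the $p_i$ are $\equiv 3\pmod 4$: then $p_ip_j\equiv 1\pmod 4$, so $\tfrac{1+\sqrt{p_ip_j}}{2}\in\mathcal{O}_K$ but $\tfrac{1+\sqrt{p_ip_j}}{2}\notin\mathbb{Z}[\varepsilon_i,\varepsilon_j]=\mathbb{Z}[\sqrt{p_i},\sqrt{p_j}]$. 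Your instinct to restrict the hypotheses (for instance, $n$ odd and at most one $p_i\not\equiv 1\pmod 4$) is therefore exactly right; the statement needs such a restriction to hold.
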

\begin{proof}
Denote by $\mathcal{O}_{K_n}$ the ring of integers of $K_n$ and notice that $\mathcal{O}_{\sqrt{p_i}}$ isomorphic to the ring of integers of $\mathbb{Q}(\sqrt{p_i})$. Since the respective evaluation maps are isomorphisms between the quotient rings and the corresponding ring of integers and since all these are free $\mathbb{Z}$-modules, succesive evaluations at $\zeta_n$, $\varepsilon_1$,...$\varepsilon_r$ give an isomorphism
$$
\mathcal{O}\cong\mathcal{O}_{K_n}\otimes\mathcal{O}_{\sqrt{p_i}}\otimes\cdots\otimes\mathcal{O}_{\sqrt{p_r}}.
$$
Moreover, since the extensions are linearly disjoint and the gcd of all the discriminants is $1$, by \cite[Thm. 4.26]{Narkiewicz} we have
$$
\mathcal{O}_{K_n}\otimes\mathcal{O}_{\sqrt{p_i}}\otimes\cdots\otimes\mathcal{O}_{\sqrt{p_r}}\cong\mathcal{O}_K.
$$
\end{proof}
Hence, by Prop. \ref{galoistensor} the twisted coordinate embedding reads as:
\begin{equation}
\begin{array}{ccc}
\sigma_{T,K}: \mathcal{O} & \to & \sigma_1(\mathcal{O}_K)\times\cdots\times\sigma_{2^rm}(\mathcal{O}_K)\\
\displaystyle\sum_{i=0}^{2^rm-1}a_i\overline{x}^i & \mapsto & 
TV_K
\left(\begin{array}{c}
a_0\\
a_1\\
\vdots\\
a_{2^rm-1}
\end{array}\right),
\end{array}
\label{latticebij2}
\end{equation}
where, as in the previous section
$$
TV_K:=TV_{K_n}\otimes V_{p_1} \otimes ... \otimes V_{p_r},
$$
with
$$
V_{p_i}=\left\{
\begin{array}{l}
\left(
\begin{array}{cr}
1 & \sqrt{p_i}\\
1 & \phantom{a}-\sqrt{p_i}
\end{array}
\right)\mbox{ if }p_i\equiv 2,3\pmod{4}\\
\\
\left(
\begin{array}{cr}
1 & \phantom{a}\frac{1+\sqrt{p_i}}{2}\\
1 & \phantom{a}\frac{1-\sqrt{p_i}}{2}
\end{array}
\right)\mbox{ if }p_i\equiv 1\pmod{4}.
\end{array}
\right.
$$
Hence, we have:
\begin{equation}
\mathrm{Cond}(V_{p_i})=\left\{
\begin{array}{l}
\sqrt{p_i}+\frac{1}{\sqrt{p_i}}\,\mbox{  if }p_i\equiv 2,3\pmod{4}\\
\\
\frac{5}{2\sqrt{p_i}}+\frac{\sqrt{p_i}}{2}\,\mbox{  if }p_i\equiv 1\pmod{4}.
\end{array}
\right.
\label{condequal}
\end{equation}
The following upper bound follows directly from Equation \ref{condequal}:
\begin{cor}For each prime number $p\geq 2$, it holds
$$
\mathrm{Cond}(V_{p})\leq 2+\sqrt{p}.
$$
\end{cor}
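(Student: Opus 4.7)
The plan is to invoke the explicit case-by-case formula from Equation \eqref{condequal} and verify the unified upper bound $2 + \sqrt{p}$ in each congruence class modulo $4$ by elementary estimation. Both cases reduce to comparing two-term expressions, so the argument is essentially a routine split and a pair of numerical checks.

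First, for primes $p \equiv 2,3 \pmod 4$, Equation \eqref{condequal} gives $\mathrm{Cond}(V_p) = \sqrt{p} + 1/\sqrt{p}$. Subtracting $\sqrt{p}$ from both sides, the inequality reduces to $1/\sqrt{p} \leq 2$, which is obvious for any prime $p \geq 2$ (in fact $1/\sqrt{p} \leq 1/\sqrt{2} < 1$).

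Second, for primes $p \equiv 1 \pmod 4$, Equation \eqref{condequal} gives $\mathrm{Cond}(V_p) = \sqrt{p}/2 + 5/(2\sqrt{p})$. I would split the comparison term-by-term: the estimate $\sqrt{p}/2 \leq \sqrt{p}$ is trivial, and it remains to show $5/(2\sqrt{p}) \leq 2$. Here I would use the fact that the smallest prime congruent to $1 \pmod 4$ is $p = 5$, so $5/(2\sqrt{p}) \leq 5/(2\sqrt{5}) = \sqrt{5}/2 < 2$. Adding the two term-by-term estimates yields the claim.

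There is no real obstacle; the only point that requires a moment of attention is remembering to use the constraint $p \geq 5$ in the second case, which is precisely what makes the bound $2$ on the $5/(2\sqrt{p})$ summand safe. Note that the unified bound $2 + \sqrt{p}$ is deliberately non-tight: it is selected so that a single clean inequality can cover both residue classes, which is convenient for downstream use when estimating $\mathrm{Cond}(TV_K)$ via Corollary \ref{condkron}.
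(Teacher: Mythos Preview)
Your proposal is correct and matches the paper's approach: the paper simply states that the bound ``follows directly from Equation \eqref{condequal}'', and your case-by-case verification is exactly the routine check that statement is pointing to. There is nothing to add.
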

Hence, as a direct consequence of Corollary \ref{condkron} and Theorem \ref{its}, we have:
\begin{prop}\label{Prop:3.3}With notation as above:
$$
\mathrm{Cond}(TV_K)\leq \phi(n)2^{\frac{\omega(n)}{2}}\prod_{i=1}^r\left(2+\sqrt{p_i}\right).
$$
\label{asyn1}
\end{prop}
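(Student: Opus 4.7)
The plan is to combine the three ingredients that have just been put in place: the explicit Kronecker-product decomposition of $TV_K$ that defines it, the multiplicativity of the condition number under Kronecker products (Corollary \ref{condkron}), and the closed-form formula for $\mathrm{Cond}(TV_{K_n})$ (Theorem \ref{Th:2.16}) together with the elementary bound $\mathrm{Cond}(V_{p_i}) \leq 2+\sqrt{p_i}$ proved in the Corollary immediately preceding the statement. The whole argument is really an assembly, so there is no genuine obstacle; I will make sure the repeated use of Corollary \ref{condkron} is applied carefully because it is stated only for two factors.

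First, I would recall that by definition
\[
TV_K = TV_{K_n}\otimes V_{p_1}\otimes\cdots\otimes V_{p_r},
\]
and that this decomposition is legitimate because Proposition \ref{galoistensor} ensures the extensions $K_n$, $\mathbb{Q}(\sqrt{p_1})$, \ldots, $\mathbb{Q}(\sqrt{p_r})$ are pairwise linearly disjoint, so the twisted coordinate basis of $\mathcal{O}$ factors as a tensor product of the corresponding bases and the change-of-basis matrix to the canonical embedding factors accordingly as a Kronecker product. Then, by a straightforward induction on $r$ using Corollary \ref{condkron},
\[
\mathrm{Cond}(TV_K) \;=\; \mathrm{Cond}(TV_{K_n})\prod_{i=1}^r \mathrm{Cond}(V_{p_i}).
\]

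Second, I would invoke Theorem \ref{Th:2.16} to compute
\[
\mathrm{Cond}(TV_{K_n}) \;=\; \phi(n)\,2^{\omega(n)/2}\,\sqrt{\prod_{i \mid n}\bigl(1-\tfrac{1}{p_i}\bigr)} \;\leq\; \phi(n)\,2^{\omega(n)/2},
\]
since each factor $1-1/p_i$ lies in $(0,1)$. Combined with the bound $\mathrm{Cond}(V_{p_i})\leq 2+\sqrt{p_i}$ from the preceding Corollary (which itself follows from Equation \eqref{condequal} by discarding the smaller $1/\sqrt{p_i}$ term and using $5/(2\sqrt{p_i})+\sqrt{p_i}/2 \leq 2+\sqrt{p_i}$ for $p_i\geq 2$), multiplying everything together yields
\[
\mathrm{Cond}(TV_K) \;\leq\; \phi(n)\,2^{\omega(n)/2}\prod_{i=1}^r\bigl(2+\sqrt{p_i}\bigr),
\]
which is the claimed inequality. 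The only minor subtlety worth double-checking is that the $\omega(n)$ appearing in the exponent of $2$ refers to the primes dividing $n$ itself, while the product over $i=1,\ldots,r$ runs over the auxiliary primes chosen in the multiquadratic extension; by hypothesis these two families are disjoint, so the two contributions do not overlap and the statement is consistent.
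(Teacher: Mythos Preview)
Your proof is correct and follows the same approach as the paper, which states the result simply as ``a direct consequence of Corollary~\ref{condkron} and Theorem~\ref{its}.'' The only cosmetic difference is that you invoke Theorem~\ref{Th:2.16} to handle the cyclotomic factor $TV_{K_n}$ in one step, whereas the paper's phrasing points directly to the two ingredients (Corollary~\ref{condkron} and Theorem~\ref{its}) from which Theorem~\ref{Th:2.16} is itself derived; unpacking either citation yields exactly the computation you wrote out.
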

We can hence conclude that
\begin{cor}For $n\geq e^e$ we have
$$
\mathrm{Cond}(TV_K)\leq \phi(n)n^{\alpha}\prod_{i=1}^r\left(2+\sqrt{p_i}\right)\mbox{ with }\alpha=0.2076.
$$
\end{cor}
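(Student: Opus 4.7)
The corollary follows directly from Proposition~\ref{Prop:3.3} once we establish the purely arithmetic inequality $2^{\omega(n)/2}\leq n^{0.2076}$ for every $n\geq e^e$. Taking logarithms, this is equivalent to
$$\omega(n)\leq \frac{2\cdot 0.2076}{\log 2}\log n \approx 0.599\,\log n.$$
So the plan reduces to bounding the ratio $\omega(n)/\log n$ from above by an explicit constant strictly below $0.599$ whenever $n\geq e^e$, and then substituting back into Proposition~\ref{Prop:3.3}.

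To control $\omega(n)/\log n$ I would invoke an effective version of the classical estimate $\omega(n)=O(\log n/\log\log n)$, for instance Robin's inequality
$$\omega(n)\leq \frac{\log n}{\log\log n}+\frac{c\,\log n}{(\log\log n)^2},$$
valid for all $n\geq 3$ with an explicit constant $c$. The hypothesis $n\geq e^e$ yields $\log\log n\geq 1$, which converts the Robin bound into a linear estimate of the form $\omega(n)\leq C\log n$ with $C$ explicit. This step is essentially routine number theory.

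The main obstacle I anticipate is pinning down the specific constant $\alpha=0.2076$ under such a mild threshold. For fixed $\omega(n)=r$, the ratio $\omega(n)/\log n$ is maximised on the $r$-th primorial $P_r=p_1\cdots p_r$, so the inequality reduces to the one-parameter family
$$r\log 2 \;\leq\; 2\cdot 0.2076\cdot \theta(p_r),$$
where $\theta$ is Chebyshev's first function. For small $r$ this is a direct numerical check (and is where the constraint on $\alpha$ really bites); for large $r$ it follows from $\theta(p_r)\sim r\log r$ together with explicit Chebyshev-type lower bounds. Balancing the small-$r$ extremal case against the asymptotic regime is what forces the particular numerical value $\alpha=0.2076$. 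Once this bound is in place, substituting $2^{\omega(n)/2}\leq n^{0.2076}$ into Proposition~\ref{Prop:3.3} yields the corollary at once.
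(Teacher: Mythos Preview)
Your overall strategy coincides with the paper's: both start from Proposition~\ref{Prop:3.3} and then bound $2^{\omega(n)/2}$ by $n^{\alpha}$ via an explicit Robin-type estimate on $\omega(n)$. The paper is in fact far less elaborate than your plan: it simply quotes Robin's inequality $\omega(n)\le 1{.}3841\,\dfrac{\log n}{\log\log n}$ (valid for $n\ge 3$), uses $\log\log n\ge 1$ when $n\ge e^{e}$, and declares the result. No primorial reduction or Chebyshev bounds are invoked.

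Your instinct that the specific constant is the delicate point is correct---indeed more than you anticipate. Substituting Robin's bound together with $\log\log n\ge 1$ into $2^{\omega(n)/2}\le n^{\alpha}$ only yields
\[
\alpha=\frac{1.3841\,\log 2}{2}\approx 0.48,
\]
not $0.2076$. And the sharper value genuinely fails for small $n\ge e^{e}$: at the primorial $n=30$ one has $2^{\omega(30)/2}=2\sqrt{2}\approx 2.83$ while $30^{0.2076}\approx 2.03$; the same happens at $n=210$ and $n=2310$. Hence the ``small-$r$ numerical check'' you outline would actually \emph{refute} the inequality at those points rather than confirm it. The constant $\alpha=0.2076$ (equivalently, the threshold $e^{e}$) appears to be a numerical slip in the statement; with the argument the paper actually gives, only $\alpha\approx 0.48$ is justified. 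Apart from this issue with the constant, your route and the paper's are the same.
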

\begin{proof}In \cite[Thm. 11 pag. 369]{Robin83}, it is proved that for $n\geq 3$ it holds
$$
\omega(n)\leq 1,3841 \frac{\log(n)}{\log\log(n)}.
$$
The result follows from this upper bound and from Proposition \ref{asyn1}.
\end{proof}
Now, the idea is to choose $n$ and the primes $p_1$,...,$p_r$ in such a way that
\begin{equation}
\mathrm{Cond}(TV_K)=O((m2^r)^k),\mbox{ with }m=\phi(n),
\label{condtradeoff}
\end{equation}
and with uniformly upper bounded (and \emph{small}) $k$, say $0\leq  k\leq 4$, so that for suitable large enough choices of $n$ and $r$ we can grant that the distortion caused by the RLWE-PLWE correspondence is polynomial in the degree of the number field, which is a sort of balance between noise and security. 
We need to recall, first, the straightforward inequality
$$
n\leq 2\phi(n)^2,\mbox{ for each }n\geq 1.
$$
Actually, if $n=2^am$ with $a\neq 1$, then $n\leq \phi(n)^2$.

For Equation \ref{condtradeoff} to hold, it is enough that we can grant that in the large
$$
(1+2\alpha)\log(m)+\sum_{i=1}^{r}\log\left(2+\sqrt{p_i}\right) \cong k(\log(m)+r\log(2)),
$$
or equivalently
\begin{equation}
k=\lim_{m,r\to\infty}\frac{(1+2\alpha)\log(m)+\sum_{i=1}^{r}\log\left(2+\sqrt{p_i}\right)}{\log(m)+r\log(2)},
\label{eqprimos}
\end{equation}
whenever this limit exists.

If, for instance, we take the popular choice $n=2^{u+1}$ and $p_i=$ the $i$-th prime, the right hand side of Eq. \ref{eqprimos}, before taking limit, can be upper bounded by
\begin{equation}
\frac{(1+2\alpha)u\log(2)+r\log\left(2+\sqrt{p_r}\right)}{u\log(2)+r\log(2)}
\label{eqfinal}
\end{equation}
Since a reasonable approximation for the $r$-th prime is $p_r\cong r \log(r)$, in the large, (\ref{eqfinal}) can be fairly approximated by
$$
\frac{(1+2\alpha)u\log(2)+r\log(2+\sqrt{r\log(r)})}{u\log(2)+r\log(2)},
$$
hence, if $u\log(2)\gg r\log(2+\sqrt{r\log(r)})$, namely, if
$$
\lim_{u,r\to\infty}\frac{u\log(2)}{r\log(2+\sqrt{r\log(r)})}=\infty,
$$
then we have:
$$
k\leq 1+2\alpha=1,4152.
$$
For instance, if $u=O(r^{1+1/l})$ for fixed $l\geq 2$, then we obtain a sub-quadratic upper-bound for the condition number.

\subsection{A hybrid embedding} Setting as before $K=K_n\mathbb{Q}(\sqrt{p_1})\cdots\mathbb{Q}(\sqrt{p_r})$, observe that, in the previous subsection, we have evaluated on the cyclotomic side the elements of the multivariate quotient ring at the full power basis and, then, we have applied the canonical embedding. It might be interesting though~\cite{LPR13,HS20,JLKNYLCY22}
, to evaluate at the usual power basis, namely, to replace the matrix $TV_{K_n}$ by the usual Vandermonde matrix $V_{\Phi_n}$. The corresponding embedding would be
\begin{equation}
\begin{array}{ccc}
\sigma_{T,K}': \mathcal{O} & \to & \sigma_1(\mathcal{O}_K)\times\cdots\times\sigma_{2^rm}(\mathcal{O}_K)\\
\displaystyle\sum_{i=0}^{2^rm-1}a_i\overline{x}^i & \mapsto & 
T'V_K
\left(\begin{array}{c}
a_0\\
a_1\\
\vdots\\
a_{2^rm-1}
\end{array}\right),
\end{array}
\label{latticebij3}
\end{equation}
where now
$$
T'V_K=V_{K_n}\otimes V_{p_1} \otimes ... \otimes V_{p_r}.
$$
Denoting by $c(n)$ the condition number of $K_n$, we have that
$$
\mathrm{Cond}(T'V_K)\leq c(n)\prod_{i=1}^r\left(2+\sqrt{p_i}\right)
$$
As in the previous subsection, we will choose $p_i=$ the $i$-th prime number. In this case, if we want 
\begin{equation}
\mathrm{Cond}(T'V_K)=O((m2^r)^k),
\label{asymp}
\end{equation}
we need to have
\begin{equation}
k=\limsup_{m,r\to\infty}\frac{\log(c(n))+\sum_{i=1}^{r}\log\left(2+\sqrt{p_i}\right)}{\log(m)+r\log(2)},
\label{eqprimos2}
\end{equation}

As we have pointed out in the previous section, in \cite{SSS22} it is shown that $c(n)$ is not polynomial for general $n\geq 2$. However, if we stick to a conductor divisible by a bounded number of primes, we can grant that $c(n)$ essentially grows polynomially with $\rad(n)$. 

Next, we will assume that $\omega(n)\leq 6$ and will use Proposition \ref{upto7} to give an explicit formula for Equation \ref{eqprimos2} to grant \ref{asymp}.

First, assume that $\omega(n)\leq 3$. Then, the right hand side of the equality (\ref{eqprimos2}), before taking limit, is upper bounded by

\begin{align}
\begin{split}
\frac{\log(4)+2\log(m)+(\omega(n)-1)\log(\phi(\rad(n)))}{\log(m)+r\log(2)} + \\ + \frac{r\log\left(2+\sqrt{r\log(r)}\right)}{\log(m)+r\log(2)}.
\label{qfinal}
\end{split}
\end{align}

As in the previous subsection, we impose that
$$
r\log(\sqrt{r\log(r)})\ll \log(m),
$$

namely, that $\lim_{r,m\to\infty}\frac{r\log(\sqrt{r\log(r)})}{\log(m)}=0$.

If $\omega(n)=1$, the upper limit of Equation \ref{qfinal} is indeed the limit and we have
\begin{align*}
\begin{split}
k = \;\; &\lim_{r,m\to\infty}\frac{\log(4)+2\log(m)}{\log(m)+r\log(2)} + \\ & + \frac{(\omega(n)-1)\log(\phi(\rad(n)))+r\log\left(2+\sqrt{r\log(r)}\right)}{\log(m)+r\log(2)} \\ =  \;\; & 2.
\end{split}
\end{align*}

If $1<\omega(n)\leq 3$, we observe that $\frac{\log(\phi(\rad(n))}{\log(m)}\leq 1$, but the limit of this expression may not exist in general. However, it is still true that $\limsup_{n\to\infty}\frac{\log(\phi(\rad(n))}{\log(m)}=1$. Hence

\begin{align*}
\begin{split}
k= \;\; & \limsup_{r,m\to\infty}\frac{\log(4)+2\log(m)}{\log(m)+r\log(2)} + \\
& + \frac{(\omega(n)-1)\log(\phi(\rad(n)))+r\log\left(2+\sqrt{r\log(r)}\right)}{\log(m)+r\log(2)} \\ 
\leq \;\; & 4.
\end{split}
\end{align*}

For $\omega(n)=4$, we have

\begin{align*}
\begin{split}
k= \;\; & \limsup_{r,m\to\infty}\frac{\log(4)+2\log(m)+4\log(\phi(\rad(n)))}{\log(m)+r\log(2)} + \\
& + \frac{r\log\left(\sqrt{2+r\log(r)}\right)}{\log(m)+r\log(2)} \\
\leq \;\; & 6,
\end{split}
\end{align*}

and for $\omega(n)=5$ we obtain

\begin{align*}
\begin{split}
k= \;\; & \limsup_{r,m\to\infty}\frac{\log(4)+2\log(m)+7\log(\phi(\rad(n)))}{\log(m)+r\log(2)} + \\
& + \frac{r\log\left(\sqrt{2+r\log(r)}\right)}{\log(m)+r\log(2)} \\
\leq  \;\; & 9.
\end{split}
\end{align*}

Finally, for $\omega(n)=6$ we obtain

\begin{align*}
\begin{split}
k= \;\; & \limsup_{r,m\to\infty}\frac{\log(4)+2\log(m)+11\log(\phi(\rad(n)))}{\log(m)+r\log(2)} + \\
& + \frac{r\log\left(\sqrt{2+r\log(r)}\right)}{\log(m)+r\log(2)}\\
\leq  \;\; & 13.
\end{split}
\end{align*}

Altogether, we have proved the following:
\begin{thm}Let $K=K_n\mathbb{Q}(\sqrt{p_1},\cdots,\sqrt{p_r})$ with $p_r=$the $r$-th prime. Assume that we choose $n$ and $r$ such that $\lim_{r,m\to\infty}\frac{r\log(\sqrt{r\log(r)})}{\log(m)}=0$. Then, we have:
\begin{itemize}
    \item If $\omega(n)\leq 3$, then 
    $$
    \mathrm{Cond}(T'V_K)=O((2^rm)^{2+\omega(n)-1}).
    $$
    \item If $\omega(n)=4$, then 
    $$
    \mathrm{Cond}(T'V_K)=O((2^rm)^6).
    $$
    \item If $\omega(n)=5$, then 
    $$
    \mathrm{Cond}(T'V_K)=O((2^rm)^9).
    $$
    \item If $\omega(n)=6$, then 
    $$
    \mathrm{Cond}(T'V_K)=O((2^rm)^{13}).
    $$
\end{itemize}
\end{thm}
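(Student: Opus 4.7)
The plan is to reduce the problem to the multiplicativity of the condition number under Kronecker products and then to a case analysis on $\omega(n)$. The starting point is the factorization
$$T'V_K = V_{K_n} \otimes V_{p_1} \otimes \cdots \otimes V_{p_r},$$
already recorded in the text. Iterating Corollary \ref{condkron} gives
$$\mathrm{Cond}(T'V_K) = c(n) \prod_{i=1}^{r} \mathrm{Cond}(V_{p_i}),$$
where $c(n) := \mathrm{Cond}(V_{K_n})$. For the quadratic factors I would invoke the corollary preceding this subsection to bound each $\mathrm{Cond}(V_{p_i})$ by $2 + \sqrt{p_i}$. For $c(n)$ I would split according to $\omega(n)$ and use the relevant clause of Proposition \ref{upto7}, which supplies bounds of the form $4\,\phi(\rad(n))^{e(\omega(n))}\, m^2$ with $e(1) = 0$, $e(2)=1$, $e(3)=2$, $e(4)=4$, $e(5)=7$, $e(6)=11$.

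To extract the asymptotic exponent $k$ for which $\mathrm{Cond}(T'V_K) = O((2^r m)^k)$, I would take logarithms and aim to control
$$k \;\geq\; \limsup_{m,r \to \infty} \frac{\log c(n) + \sum_{i=1}^{r} \log(2 + \sqrt{p_i})}{\log m + r\log 2}.$$
Using the monotone estimate $\log(2 + \sqrt{p_i}) \leq \log(2 + \sqrt{p_r})$ together with the prime number theorem in the form $p_r = O(r \log r)$, the sum in the numerator is $O(r \log \sqrt{r \log r})$. The hypothesis on $n$ and $r$ forces this quantity to be $o(\log m)$, hence $o(\log m + r\log 2)$, so the prime-dependent term vanishes in the limsup and only the cyclotomic contribution $\log c(n)$ survives.

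What remains is to compute $\limsup \log c(n)/(\log m + r\log 2)$ case by case. Since $\phi(\rad(n)) \leq m$ we have $\log \phi(\rad(n)) \leq \log m$, and $\log m / (\log m + r\log 2) \leq 1$; inserting the bound of Proposition \ref{upto7} for $c(n)$ then yields $k \leq 2 + e(\omega(n))$, which specializes to $2, 3, 4$ for $\omega(n) = 1, 2, 3$ (equivalently $2 + \omega(n) - 1$ in that range) and to $6, 9, 13$ for $\omega(n) = 4, 5, 6$. The main obstacle—and really the only genuinely non-formal step—is that for $\omega(n) \geq 2$ the ratio $\log\phi(\rad(n))/\log m$ need not converge, so one must work with limsups throughout and can only exploit the crude inequality $\phi(\rad(n)) \leq m$ rather than any sharper identification of the limit. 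Once this point is handled, the rest is the bookkeeping of exponents from Proposition \ref{upto7} combined with the prime-sum estimate above.
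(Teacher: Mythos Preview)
Your proposal is correct and follows essentially the same approach as the paper: factor $T'V_K$ as a Kronecker product, apply the multiplicativity of the condition number, bound the quadratic factors by $2+\sqrt{p_i}$ and the cyclotomic factor via Proposition~\ref{upto7}, then take logarithms and use the growth hypothesis together with $p_r\sim r\log r$ to kill the prime contribution in the limsup, leaving only $\log c(n)/\log m$ controlled by $\phi(\rad(n))\le m$. Your explicit bookkeeping via the function $e(\omega(n))$ is slightly tidier than the paper's case-by-case display, but the argument is identical in substance, including the observation that for $\omega(n)\ge 2$ one must pass to the limsup because $\log\phi(\rad(n))/\log m$ need not converge.
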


\section{Conclusions}
\label{sec:conclusions}
We have started discussing in Section 1 why the CRT is useful to speed up homomorphic encryption. 
In most applications, CRT is applied both on plaintexts and ciphertexts, hence enabling SIMD operations directly over encrypted integer vectors. We have also pointed out how several widely used homomorphic schemes, such as CKKS and BFV, use other non-polynomial operations, like coefficient rounding/rescaling, which are not entirely compatible with the CRT--NTT representation. Because of this, when running a non-polynomial operation, one has to swap between NTT and coefficient-wise representations, which presents an asymptotic cost of $\mathcal{O}(m \log{m})$ elementary multiplications. This led us to address the question whether or not there is a more compact representation that can be converted to double-CRT in linear time.

We recall previous investigation of the second author, where the multiquadratic family was introduced, answering that question in an affirmative manner. However, a new difficulty appears in this setting since the RLWE and PLWE problems for multiquadratic fields are not equivalent, which led us to introduce the cyclo-multiquadratic family studied in the present work. 

For this family, we have proved that (a) we have RLWE--PLWE equivalence if we consider the twisted power basis in the cyclotomic part, for every choice of conductor, and (b) we have equivalence under the hybrid embedding if the conductor of the cyclotomic part is divisible for up to $6$ different primes. As an auxiliary tool, we have obtained refined bounds for the condition number of cyclotomic Vandermonde matrices which are much sharper than existing ones, by making use of recent results in analytic number theory.

As a result, we have showed that our family of cyclo-multiquadratic fields speeds up the efficient swapping between NTT and CRT representations by a factor of at least two under the twisted power basis 
while keeping the RLWE--PLWE equivalence.


\section*{Acknowledgments}

I. Blanco-Chac\'on is partially supported by the grants MTM2016-79400-P (Spanish Ministry of Science and Innovation), CCG20/IA-057 (University of Alcal\'a), and PID2019-104855RBI00/AEI/10.13039/501100011033 (Spanish Ministry of Science and Innovation). Part of the work has been completed as a visiting professor at Aalto University School of Science.  
A. Pedrouzo-Ulloa is partially supported by the European Union's Horizon Europe Framework Programme for Research and Innovation Action under project TRUMPET (proj. no. 101070038), by the European Regional Development Fund (FEDER) and Xunta de Galicia under project ``Grupos de Referencia Competitiva'' (ED431C 2021/47), and by FEDER and MCIN/AEI under project FELDSPAR (TED2021-130624B-C21). Part of the work has been completed as a visiting researcher at CEA-List, Universit\'{e} Paris-Saclay, funded by the European Union ``NextGenerationEU/PRTR'' by means of a Margarita Salas grant of the Universidade de Vigo. 
R. Y. Njah Nchiwo is supported in part by a PhD scholarship by the Magnus Ehrnrooth Foundation, Finland, in part by Academy of Finland, grant 351271 (P.I. Camilla Hollanti) and in part by MATINE, Finnish Ministry of Defence, grant \#2500M-0147 (P.I. PI Camilla Hollanti).   
B. Barbero-Lucas is partially supported by the grant CCG20/IA-057.

Funded by the European Union. Views and opinions expressed are however those of the authors only and do not necessarily reflect those of the European Union. Neither the European Union nor the granting authority can be held responsible for them. 

The authors would like to thank Camilla Hollanti and Ra\'ul Dur\'an-D\'iaz for helpful discussion and thorough reading of several versions of our work. Likewise, I. Blanco-Chac\'on would like to thank Aalto SCI for inviting him as a visiting professor for the year 2023--2024.

\bibliographystyle{IEEEtran}
\bibliography{bibapu.bib}

\appendix
Here we give the proof of Proposition \ref{upto7} for the cases $\omega(n)=4$, $5$ and $6$ (the cases $\omega(n)\leq 3$ are dealt with in \cite{blanco1} Theorems 4.1, 4.3 and 4.6). Denote $m:=\phi(n)$ for $n\geq 2$ and $k:=\omega(n)$.For $n\geq 2$, if $n=p_1^{r_1}\cdots p_k^{r_k}$ is its prime factorisation, denote $\rad(n):=p_1\cdots p_k$.

First of all, observe that for $n\geq 2$ we have that $n\leq 2\phi(n)^2$. Actually, if $n=2^am$ with $a\neq 1$, indeed $n\leq \phi(n)^2$.

We will need the following facts about cyclotomic polynomials, whose proof can be found in \cite{wash} Chapter 2:

\begin{prop}Let $n =pr$ with $p$ prime and $p\nmid r$.Then
$$
\Phi_n(x)=\frac{\Phi_r(x^p)}{\Phi_r(x)}.
$$
In addition, if we write $n=p_1^{r_1}\cdots p_l^{r_l}$ with $p_1,\cdots ,p_l$ different primes, then
$$
\Phi_n(x)=\Phi_{\rad(n)}(x^{\frac{n}{\rad(n)}}).
$$
\label{cycloprop}
\end{prop}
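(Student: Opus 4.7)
My plan is to prove both identities via the characterization of $\Phi_n(x)$ as the unique monic polynomial whose roots are exactly the primitive $n$-th roots of unity, each with multiplicity one. The key tool is tracking how the map $\alpha \mapsto \alpha^p$ acts on orders of roots of unity, together with a degree count via Euler's totient.

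For the first identity, I will first show $\Phi_r(x)\mid \Phi_r(x^p)$ in $\mathbb{Z}[x]$: if $\alpha$ is a primitive $r$-th root of unity then $\gcd(p,r)=1$ implies $\alpha^p$ is also a primitive $r$-th root of unity, hence $\Phi_r(\alpha^p)=0$. Writing $Q(x):=\Phi_r(x^p)/\Phi_r(x)$, a degree count gives $\deg Q = p\phi(r)-\phi(r) = (p-1)\phi(r) = \phi(pr) = \deg \Phi_n$, using $\gcd(p,r)=1$. Next, I classify the roots of $\Phi_r(x^p)$: if $\alpha^p$ is a primitive $r$-th root, then $\alpha^{pr}=1$, so $d:=\mathrm{ord}(\alpha)$ divides $pr$; since $\gcd(p,r)=1$, write $d=d_1 d_2$ with $d_1\in\{1,p\}$ and $d_2\mid r$. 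The identity $\mathrm{ord}(\alpha^p)=d/\gcd(d,p)=r$ forces $d_2=r$ in both cases, so either $d=r$ (giving the roots of $\Phi_r$) or $d=pr=n$ (giving the roots of $\Phi_n$). Counting shows we have exactly $\phi(r)+\phi(n)=p\phi(r)$ distinct roots of $\Phi_r(x^p)$, so all roots are simple and $\Phi_r(x^p)=\Phi_r(x)\Phi_n(x)$, i.e., $Q(x)=\Phi_n(x)$.

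For the second identity, my plan is to establish an auxiliary companion identity: if $p\mid m$, then $\Phi_{pm}(x)=\Phi_m(x^p)$. This follows by the same type of root analysis: if $\alpha^p$ has order $m$, then since $p\mid m$ one necessarily has $p\mid \mathrm{ord}(\alpha)$ (otherwise $\gcd(\mathrm{ord}(\alpha),p)=1$ forces $\mathrm{ord}(\alpha)=m$, contradicting $p\mid m$), from which $\mathrm{ord}(\alpha)=pm$; the degree count $p\phi(m)=\phi(pm)$ (which uses $p\mid m$) closes the argument. Then I induct on $\Omega(n)-\omega(n)=\sum_{i=1}^{l}(r_i-1)$. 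The base case is when every $r_i=1$, i.e., $n=\mathrm{rad}(n)$, in which case the identity is tautological. For the inductive step, pick any $r_i\geq 2$, say $r_1\geq 2$, and set $n':=n/p_1$. Then $p_1\mid n'$ still, so $\mathrm{rad}(n')=\mathrm{rad}(n)$ and $n'/\mathrm{rad}(n')=(n/\mathrm{rad}(n))/p_1$. Applying the auxiliary identity gives $\Phi_n(x)=\Phi_{n'}(x^{p_1})$, and the inductive hypothesis applied to $n'$ yields $\Phi_{n'}(y)=\Phi_{\mathrm{rad}(n)}(y^{n'/\mathrm{rad}(n)})$, so substituting $y=x^{p_1}$ produces $\Phi_n(x)=\Phi_{\mathrm{rad}(n)}(x^{n/\mathrm{rad}(n)})$.

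The main technical obstacle is the order-of-roots case analysis in the first identity, where one must simultaneously show no primitive $n$-th root is missed and no spurious root is included, and in particular verify that the roots arising are all simple (this follows automatically once the degree and the number of distinct roots agree). Everything else is bookkeeping with $\phi$ and induction.
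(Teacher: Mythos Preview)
Your proof is correct. The paper itself does not supply a proof of this proposition; it simply cites Washington's \emph{Introduction to Cyclotomic Fields}, Chapter~2, for these standard identities. Your argument---identifying the roots of $\Phi_r(x^p)$ via the order formula $\mathrm{ord}(\alpha^p)=\mathrm{ord}(\alpha)/\gcd(\mathrm{ord}(\alpha),p)$, matching degrees via multiplicativity of $\phi$, and then inducting on the excess $\sum_i(r_i-1)$ using the companion identity $\Phi_{pm}(x)=\Phi_m(x^p)$ when $p\mid m$---is exactly the classical approach one finds in that reference, so there is nothing further to compare.
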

To bound the condition number in the aforementioned cases, notice first that since $||V_{\Phi_n}||=m$, we only need to upper bound $||V_{\Phi_n}^{-1}||$. To start with, we recall the following result:
\begin{lem}[\cite{RSW18}, Section 4.2] Notations as before, it holds
$$
V_{\Phi_n}^{-1}=(w_{ij})
$$
with
\begin{equation}
w_{ij}=(-1)^{m-i}\frac{e_{m-i}(\bar{\zeta}_j)}{\prod_{k\neq j}(\zeta_j-\zeta_k)},
\end{equation}
where $e_{m-i}$ is the elementary symmetric polynomial of degree $m-i$ in $m-1$ variables and $\bar{\zeta}_j=(\zeta_1, \zeta_2, \dots, \zeta_{j-1}, \zeta_{j+1}, \dots, \zeta_m) $.
\end{lem}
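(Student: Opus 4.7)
My plan is to recover the formula by Lagrange interpolation, which is the cleanest way to invert a Vandermonde matrix. Writing $\zeta_1,\ldots,\zeta_m$ for the primitive $n$-th roots of unity and $m=\phi(n)$, the matrix $V_{\Phi_n}$ has entry $\zeta_j^{i-1}$ in position $(j,i)$ with the convention of the paper (so that $V_{\Phi_n}\mathbf{a}$ represents evaluating $p(x)=\sum_{i=0}^{m-1}a_i x^i$ at the roots). Hence $V_{\Phi_n}\mathbf{a}=\mathbf{y}$ is equivalent to asking for the unique polynomial $p(x)$ of degree strictly less than $m$ with $p(\zeta_j)=y_j$ for $j=1,\ldots,m$, and the entries of $V_{\Phi_n}^{-1}$ are, by definition, the coefficients expressing $\mathbf{a}$ linearly in $\mathbf{y}$.

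The interpolation polynomial is given by $p(x)=\sum_{j=1}^{m}y_j L_j(x)$, where the Lagrange basis polynomials are
\[
L_j(x)=\prod_{k\neq j}\frac{x-\zeta_k}{\zeta_j-\zeta_k}.
\]
The denominator is the scalar $\prod_{k\neq j}(\zeta_j-\zeta_k)$, so the task reduces to expanding the numerator in powers of $x$. Using the standard identity
\[
\prod_{k\neq j}(x-\zeta_k)=\sum_{\ell=0}^{m-1}(-1)^{m-1-\ell}e_{m-1-\ell}(\bar{\zeta}_j)\,x^{\ell},
\]
with $\bar{\zeta}_j=(\zeta_1,\ldots,\zeta_{j-1},\zeta_{j+1},\ldots,\zeta_m)$, I read off the coefficient of $x^{\ell}$ in $L_j(x)$.

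Substituting back into $p(x)=\sum_j y_j L_j(x)$ and comparing with $p(x)=\sum_{i=0}^{m-1}a_i x^i$, the coefficient $a_{i-1}$ (so $i$ ranges from $1$ to $m$) is obtained by taking $\ell=i-1$, which yields
\[
a_{i-1}=\sum_{j=1}^{m}\frac{(-1)^{m-i}e_{m-i}(\bar{\zeta}_j)}{\prod_{k\neq j}(\zeta_j-\zeta_k)}\,y_j.
\]
Reading off the matrix of this linear map gives exactly $w_{ij}=(-1)^{m-i}e_{m-i}(\bar{\zeta}_j)/\prod_{k\neq j}(\zeta_j-\zeta_k)$, as claimed.

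The argument contains no real obstacle: the only point where one has to be careful is the index shift between coefficient vectors indexed from $0$ to $m-1$ and matrix entries indexed from $1$ to $m$, which is why the exponent of $-1$ and the subscript of $e$ both become $m-i$ instead of $m-1-i$. Invertibility of $V_{\Phi_n}$ is automatic from the distinctness of the $\zeta_j$, guaranteeing that the Lagrange representation is well-defined and the denominators are nonzero.
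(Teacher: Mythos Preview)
Your argument is correct: Lagrange interpolation is exactly the right tool, and your index bookkeeping (matching the paper's convention that row $j$, column $i$ of $V_{\Phi_n}$ is $\zeta_j^{\,i-1}$) correctly produces the exponent $m-i$ in both the sign and the elementary symmetric polynomial. The paper itself does not prove this lemma but simply cites it from \cite{RSW18}, Section~4.2; the derivation there is the same Lagrange-interpolation computation you carried out, so your approach coincides with the intended one.
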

For $n\geq 2$, denote by $A(n)$ the maximum of all the coefficients in absolute value of the cyclotomic polynomial $\Phi_n(x)$. Next, we will use the following result to upper bound the entries of $V_{\Phi_n}^{-1}$:
\begin{prop}[\cite{blanco1} Proposition 3.7]
For $n\geq 2$, the following upper bound is valid:
\begin{equation}
|w_{ij}|\leq \rad(n)\frac{A(n)+1}{|\Phi'_{\rad(n)}(\zeta_j^{n/\rad(n)})|}.
\label{prop: wij}
\end{equation}
\end{prop}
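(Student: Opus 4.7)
\textbf{Plan for proving the bound $|w_{ij}|\leq \rad(n)\,(A(n)+1)/|\Phi'_{\rad(n)}(\zeta_j^{n/\rad(n)})|$.}

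The plan is to bound numerator and denominator of the closed form $w_{ij} = (-1)^{m-i} e_{m-i}(\bar\zeta_j)/\Phi_n'(\zeta_j)$ separately, in both cases exploiting the sparsity identity $\Phi_n(x)=\Phi_{\rad(n)}(x^N)$ from Proposition~\ref{cycloprop}, where $N:=n/\rad(n)$. Note first that the denominator $\prod_{k\neq j}(\zeta_j-\zeta_k)$ is precisely $\Phi_n'(\zeta_j)$, so $|w_{ij}|=|e_{m-i}(\bar\zeta_j)|/|\Phi_n'(\zeta_j)|$.

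For the numerator, I would identify the elementary symmetric polynomial with a coefficient of a quotient polynomial. Namely, setting $Q_j(x):=\Phi_n(x)/(x-\zeta_j)=\prod_{k\neq j}(x-\zeta_k)$ and expanding $Q_j(x)=\sum_{s=0}^{m-1}(-1)^{m-1-s}e_{m-1-s}(\bar\zeta_j)\,x^s$, one sees that $(-1)^{m-i}e_{m-i}(\bar\zeta_j)$ equals the coefficient $c_{i-1}$ of $x^{i-1}$ in $Q_j(x)$. Synthetic division of $\Phi_n(x)=\sum_{l=0}^m a_l x^l$ by $x-\zeta_j$ yields the explicit formula
$$c_{i-1}=\sum_{l=i}^{m} a_l\,\zeta_j^{\,l-i},$$
so using $|\zeta_j|=1$ we get $|e_{m-i}(\bar\zeta_j)|\leq \sum_{l=i}^{m}|a_l|$.

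Now I would invoke Proposition~\ref{cycloprop}: since $\Phi_n(x)=\Phi_{\rad(n)}(x^N)$, the coefficient $a_l$ vanishes unless $N\mid l$; the nonzero coefficients occur at $l\in\{0,N,2N,\ldots,m\}$, giving exactly $\phi(\rad(n))+1$ of them, each bounded in absolute value by $A(n)$. Therefore
$$|e_{m-i}(\bar\zeta_j)| \le (\phi(\rad(n))+1)\,A(n)\le \rad(n)\,(A(n)+1),$$
where the last inequality uses $\phi(\rad(n))+1\leq \rad(n)$ and $A(n)\leq A(n)+1$.

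For the denominator I would differentiate the sparsity relation: $\Phi_n'(x)=N\,x^{N-1}\,\Phi_{\rad(n)}'(x^N)$, so $|\Phi_n'(\zeta_j)|=N\,|\Phi_{\rad(n)}'(\zeta_j^N)|\geq |\Phi_{\rad(n)}'(\zeta_j^N)|$ because $|\zeta_j|=1$ and $N\geq 1$. Combining the numerator and denominator bounds gives exactly
$$|w_{ij}|\leq \frac{\rad(n)\,(A(n)+1)}{|\Phi_{\rad(n)}'(\zeta_j^{n/\rad(n)})|},$$
finishing the proof. The only mildly delicate step is the identification of $c_{i-1}$ with $(-1)^{m-i}e_{m-i}(\bar\zeta_j)$ and the sparse-coefficient count; once the identity $\Phi_n(x)=\Phi_{\rad(n)}(x^N)$ is in hand, everything else is bookkeeping with the triangle inequality.
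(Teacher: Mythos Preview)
The paper does not actually prove this proposition; it merely quotes it from \cite{blanco1} (Proposition~3.7 there) and uses it as a black box in the Appendix. So there is no ``paper's own proof'' to compare against.

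That said, your argument is correct and is essentially the natural one. The identification of $(-1)^{m-i}e_{m-i}(\bar\zeta_j)$ with the coefficient $c_{i-1}$ of $Q_j(x)=\Phi_n(x)/(x-\zeta_j)$ is exactly Vieta, and the synthetic-division formula $c_{i-1}=\sum_{l\ge i}a_l\zeta_j^{\,l-i}$ is right. Combining the sparsity relation $\Phi_n(x)=\Phi_{\rad(n)}(x^N)$ with $|\zeta_j|=1$ to count at most $\phi(\rad(n))+1$ nonzero coefficients, each of size at most $A(n)$, gives the numerator bound; the chain $(\phi(\rad(n))+1)A(n)\le \rad(n)\,A(n)\le \rad(n)(A(n)+1)$ is valid since $\phi(k)+1\le k$ for $k\ge 2$. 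For the denominator, differentiating $\Phi_n(x)=\Phi_{\rad(n)}(x^N)$ and using $N\ge 1$ and $|\zeta_j|=1$ gives $|\Phi_n'(\zeta_j)|=N\,|\Phi_{\rad(n)}'(\zeta_j^N)|\ge |\Phi_{\rad(n)}'(\zeta_j^N)|$, which is what you need. Nothing is missing.

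One minor remark: your numerator estimate is in fact a bit sharper than the stated bound (you get $(\phi(\rad(n))+1)A(n)$ rather than $\rad(n)(A(n)+1)$, and you also discard a helpful factor $N=n/\rad(n)$ in the denominator). This does not matter for the proposition as stated, but it explains why the bound in \eqref{prop: wij} is somewhat loose.
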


In \cite{blanco1} Section 4, the author uses some classical upper upper bounds for $A(n)$, namely, for $\omega(n)\leq 2$, it is well know that $A(n)\leq 1$ while for $\omega(n)=3$, say, $\rad(n)=pqr$ with $p<q<r$, it holds that $A(n)\leq p-1$. However, for $\omega(n)>3$ things become more complicated and there is no hope for a general polynomial asymptotic expression for $A(n)$, indeed, Erd\"os proved that that there exist infinitely many $n$ such that $A(n)\geq e^{e^{\log(2)\log(n)/ \log(\log(n))}}$. We can, nevertheless, upper bound the condition number for, at least, $\omega(n)\leq 6$, taking advantage of the following recent result:
\begin{thm}[\cite{Bounds56primes} Theorem 4] For any $n\geq 2$, we have:
\begin{itemize}
    \item If $n = pqrst$  with $p<q<r<s<t$, then 
    $$
    A(n)  \leq \frac{135 p^{7}q^3 r}{512}.
    $$
    \item If $n = pqrstu$  with $p<q<r<s<t<u,$ then 
    $$
    A(n)  \leq \frac{18225 p^{15}q^7 r^3 s}{262144}.
    $$
\end{itemize}
\label{Theorem1}
\end{thm}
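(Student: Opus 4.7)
The plan is to follow the standard inductive approach used in the literature to bound coefficients of cyclotomic polynomials with few distinct prime factors (as developed by Bachman, Beiter, Bzd\k{e}ga, Kaplan and others).

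First, by Proposition \ref{cycloprop} we have $\Phi_n(x)=\Phi_{\rad(n)}(x^{n/\rad(n)})$, so the list of nonzero coefficients of $\Phi_n$ coincides with that of $\Phi_{\rad(n)}$. In particular $A(n)=A(\rad(n))$, and we may reduce to the squarefree case.

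Second, iterating the recursion $\Phi_{np}(x)=\Phi_n(x^p)/\Phi_n(x)$ (valid whenever $p \nmid n$) or directly applying M\"obius inversion, we obtain
$$
\Phi_n(x) \;=\; \prod_{d\mid n}(1-x^{n/d})^{\mu(d)}.
$$
Since $\deg\Phi_n=m$, every coefficient of $\Phi_n$ is determined by this product modulo $x^{m+1}$, so each factor $(1-x^{n/d})^{-1}$ coming from $\mu(d)=-1$ may be replaced by a finite truncation of its geometric series expansion. This reduces the problem of bounding $A(n)$ to that of estimating a truncated product of formal power series with integer exponents of controlled size.

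Third, I would argue by induction on $\omega(n)$, using Proposition \ref{upto7} and the classical bounds for $\omega(n)\leq 4$ as the base case. For $\omega(n)=5$, write $n=n't$ with $n'=pqrs$ and $t$ the largest prime, and apply $\Phi_n(x)=\Phi_{n'}(x^t)/\Phi_{n'}(x)$ modulo $x^{m+1}$. A careful monomial-by-monomial estimate of the quotient, together with the inductive bound on $A(n')$, produces an inequality of the desired shape $c\cdot p^{a_1}q^{a_2}r^{a_3}$. The 6-prime case follows by applying the same step once more to $n=n''u$ with $n''=pqrst$, which is consistent with the announced constant $18225/262144=(135/512)^2$.

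The main obstacle is sharpness of the constants. A naive application of the M\"obius formula produces only exponential bounds in $\omega(n)$; to reach the stated polynomial bounds $135\,p^7q^3r/512$ and $18225\,p^{15}q^7r^3s/262144$ one must exploit substantial cancellations in the truncated product, organised according to the residue classes of the primes modulo one another. The asymmetric roles of the primes (reflected in the exponent patterns $7,3,1$ and $15,7,3,1$, both of the form $2^k-1$) must be tracked through the induction, separating the ``small'' primes (which carry the heaviest weight) from the ``large'' ones and optimising the intermediate inequalities case by case. This delicate bookkeeping, rather than any single sophisticated ingredient, is the technical heart of the proof.
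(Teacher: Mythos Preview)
The paper does not prove this statement at all: it is quoted as Theorem~4 of the external reference \cite{Bounds56primes} (Bzd\k{e}ga's bounds on cyclotomic coefficients) and used as a black box in the Appendix to feed into Proposition~\ref{prop: wij}. There is therefore no ``paper's own proof'' to compare against; any self-contained argument you give is necessarily going beyond what the paper does.

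Regarding the content of your sketch, two concrete issues. First, invoking Proposition~\ref{upto7} as part of the base case is a mismatch: that proposition bounds condition numbers of Vandermonde matrices, not the coefficient heights $A(n)$; the relevant base inputs are the classical bounds $A(n)\le 1$ for $\omega(n)\le 2$, $A(n)\le p-1$ for $\omega(n)=3$, and $A(n)\le p(p-1)(pq-1)$ for $\omega(n)=4$, which the paper also quotes. Second, the inductive step ``apply $\Phi_n(x)=\Phi_{n'}(x^t)/\Phi_{n'}(x)$ and estimate the quotient monomial by monomial'' does not work as stated: coefficients of a polynomial quotient are governed by a recursion that multiplies, rather than merely propagates, the bound on $A(n')$, so a direct application yields exponential loss rather than the polynomial shape $p^{2^k-1}q^{2^{k-1}-1}\cdots$. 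Your observation that $18225/262144=(135/512)^2$ is correct and is indeed a fingerprint of Bzd\k{e}ga's recursive scheme, but the mechanism there is not a quotient estimate; it passes through an explicit inclusion--exclusion representation of the individual coefficients of $\Phi_n$ and a combinatorial count over residue systems modulo the small primes. What you describe as ``delicate bookkeeping'' is in fact the entire proof, and your outline does not yet contain the idea that makes it converge.
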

We will also need the following result to lower bound the denominators in Proposition \ref{prop: wij}:
\begin{lem}For $4\leq k\leq 6$ we have
$$
\frac{1}{|\phi_n'(\zeta_n)|}\leq 2m^{k-3}.
$$
\label{dens}
\end{lem}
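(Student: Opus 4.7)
The strategy has two stages: first reduce to the squarefree radical of $n$, then analyse the squarefree case recursively via a cyclotomic peeling identity.

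\emph{Stage 1: reduction.} Let $N:=\rad(n)$ and $s:=n/N$. By Proposition~\ref{cycloprop}, $\Phi_n(x)=\Phi_N(x^s)$, so $\Phi_n'(x)=sx^{s-1}\Phi_N'(x^s)$ and, evaluating at $\zeta_n$ (noting that $\zeta_n^s$ is a primitive $N$-th root of unity which we call $\zeta_N$), $|\Phi_n'(\zeta_n)|=s\,|\Phi_N'(\zeta_N)|$. Since $m=s\,\phi(N)$ with $s\ge 1$, it suffices to prove the squarefree version
\[
\frac{1}{|\Phi_N'(\zeta_N)|}\;\le\; 2\,\phi(N)^{k-3}\qquad\bigl(N\text{ squarefree},\ \omega(N)=k\in\{4,5,6\}\bigr),
\]
and the stated bound on $\tfrac{1}{|\Phi_n'(\zeta_n)|}$ then follows by dividing by $s$ and using $s^{k-3}\ge 1$.

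\emph{Stage 2: recursion.} Peel off the largest prime $p$ of $N$ and write $N=N'p$ with $\omega(N')=k-1$. Proposition~\ref{cycloprop} supplies $\Phi_N(x)\Phi_{N'}(x)=\Phi_{N'}(x^p)$; differentiating, evaluating at $\zeta_N$ and using $\Phi_N(\zeta_N)=0$, one obtains the key identity
\[
|\Phi_N'(\zeta_N)|\;=\;\frac{p\,|\Phi_{N'}'(\zeta_{N'})|}{|\Phi_{N'}(\zeta_N)|},\qquad \zeta_{N'}:=\zeta_N^p,
\]
where $\zeta_{N'}$ is a primitive $N'$-th root of unity. The triangle inequality applied to the coefficient expansion of $\Phi_{N'}(\zeta_N)$ on the unit circle yields $|\Phi_{N'}(\zeta_N)|\le (\phi(N')+1)A(N')$, so
\[
\frac{1}{|\Phi_N'(\zeta_N)|}\;\le\;\frac{(\phi(N')+1)A(N')}{p}\cdot\frac{1}{|\Phi_{N'}'(\zeta_{N'})|}.
\]

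\emph{Induction.} The base case $k=3$ is implicit in the proofs of~\cite[Thms.~4.1, 4.3, 4.6]{blanco1}, which already yield $\frac{1}{|\Phi_N'(\zeta_N)|}\le 2$ when $\omega(N)=3$. Assuming the bound at level $k-1$, the inductive step reduces to establishing
\[
(\phi(N')+1)\,A(N')\;\le\; p\,(p-1)^{k-3}\,\phi(N').
\]
For $k=4$ one has $\omega(N')=3$ and the classical estimate $A(N')\le p_1-1\le p-1$ makes the inequality immediate. For $k=5$ and $k=6$ (so $\omega(N')\in\{4,5\}$) we invoke Theorem~\ref{Theorem1}, which furnishes a polynomial bound on $A(N')$ in terms of the small primes dividing $N'$; the inequality $q_i\le p$ for every prime $q_i\mid N'$ then lets us absorb the powers of the $q_i$'s appearing there into $p(p-1)^{k-3}$, after redistributing factors across the multiplicative decomposition $\phi(N')=\prod_{i=1}^{k-1}(q_i-1)$.

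\emph{Main obstacle.} The tightest case is $k=6$, where Theorem~\ref{Theorem1} gives $A(N')\le\tfrac{135}{512}q_1^7q_2^3q_3$ for $\omega(N')=5$. Matching this polynomial growth in the three smallest primes $q_1,q_2,q_3$ against the single factor $(p-1)^3$ is not automatic (a naive count would require $p^{11}\lesssim p^4$), and the inductive step closes only because the inductive hypothesis at level $k-1=5$ already provides substantial slack: the full power $\phi(N')^{k-4}$ carries several copies of each $(q_i-1)$ that can be matched against the $q_i^{\alpha_i}$ appearing in $A(N')$. The bookkeeping needed to track these cancellations, while absorbing the leftover numerical constants into the final constant $2$ on the right-hand side of the lemma, is the technical heart of the proof.
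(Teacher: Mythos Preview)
Your Stage~1 reduction to the squarefree radical and the peeling identity in Stage~2 are both correct and coincide with the paper's setup. The divergence is that the paper does \emph{not} run a one-step induction: it iterates the identity $|\Phi_N'(\zeta_N)|=\dfrac{p\,|\Phi_{N'}'(\zeta_{N'})|}{|\Phi_{N'}(\zeta_N)|}$ all the way down, obtaining for squarefree $N=p_1\cdots p_k$ (primes increasing)
\[
\frac{1}{|\Phi_N'(\zeta_N)|}\;=\;\frac{|\zeta_N^{N/p_1}-1|\,\prod_{j=1}^{k-1}|\Phi_{p_1\cdots p_j}(\zeta_N^{p_{j+1}\cdots p_k})|}{N},
\]
and then bounds each factor separately. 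The point of the full unwinding is that most of the $A$-values that appear are those of $p_1,\,p_1p_2,\,p_1p_2p_3$, which are $1,1,p_1-1$; only the top one or two factors carry a large $A$.

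Your inductive scheme, by contrast, collapses all of this into the single estimate $\frac{1}{|\Phi_{N'}'|}\le 2\phi(N')^{k-4}$ and then needs
\[
(\phi(N')+1)\,A(N')\;\le\;p\,(p-1)^{k-3}\,\phi(N'),
\]
i.e.\ essentially $A(N')\lesssim p^{\,k-2}$. This is false already for $k=5$: take $N=101\cdot 103\cdot 107\cdot 109\cdot 113$, so $N'=101\cdot 103\cdot 107\cdot 109$ and $p=113$. The four-prime bound $A(N')\le p_1(p_1-1)(p_1p_2-1)$ gives $A(N')\le 101\cdot 100\cdot 10402\approx 1.05\times 10^{8}$, whence $(\phi(N')+1)A(N')\approx 1.2\times 10^{16}$, while $p(p-1)^{2}\phi(N')\approx 1.7\times 10^{14}$. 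The inequality fails by a factor of roughly $70$; for $k=6$ the gap is far larger (as you yourself note, the exponents are $11$ versus $4$).

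The ``slack'' paragraph does not repair this. In your bound
\[
\frac{(\phi(N')+1)A(N')}{p}\cdot 2\phi(N')^{\,k-4}\;\stackrel{?}{\le}\;2\phi(N)^{\,k-3}=2(p-1)^{k-3}\phi(N')^{\,k-3},
\]
the powers $\phi(N')^{k-4}$ on the left and $\phi(N')^{k-3}$ on the right cancel down to a single $\phi(N')$ on the right; there are no extra copies of $(q_i-1)$ left over to absorb the $q_1^{7}q_2^{3}q_3$ coming from $A(N')$. Any genuine slack lives in the \emph{actual} value of $1/|\Phi_{N'}'|$, not in the inductive hypothesis you are invoking; to exploit it you would have to carry a sharper statement through the induction, which amounts to the paper's full unwinding. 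As written, the argument closes for $k=4$ but not for $k=5,6$.
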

\begin{proof}We give the proof for $k=4$, as the argument is analogue for the rest of cases. So, suppose that $n=pqrs$ with $p<q<r<s$.  Using Proposition \ref{cycloprop} we have:

\begin{align*}
\begin{split}
|\phi'(\zeta_n)|= \;\; & \frac{s|\phi_{pqr}'(\zeta_n^s)|}{|\phi_{pqr}(\zeta_n)|} \\
= \;\; & \frac{sr|\phi_{pq}'(\zeta_n^{rs})|}{|\phi_{pqr}(\zeta_n)||\phi_{pq}(\zeta_n^s)|}\\
= \;\; & \frac{pqrs}{|\phi_{pqr}(\zeta_n)||\phi_{pq}(\zeta_n^s)||\phi_{p}(\zeta_n^{rs})|\zeta_n^{qrs}-1|},
\end{split}
\end{align*}

hence

\begin{align*}
\begin{split}
\frac{1}{|\phi_n'(\zeta_n)|}\leq \;\; & \frac{2(p-1)^3(q-1)^2(r-1)}{pqrs} \\
\leq \;\; & \frac{2(p-1)^2(q-1)}{s} \\
\leq \;\; & 2m.
\end{split}
\end{align*}
\end{proof}

\subsection{Case $k=4$} We start by observing that for $m=pqrs$, we have (see \cite{Bounds56primes}, pag. 1) $A(m)\leq p(p-1)(pq-1)$ for $p<q<r<s$. Secondly, for any $n\geq 2$, since $\Phi_n(x)=\Phi_{\rad(n)}(x^{n/\rad(n)})$, we have that if $k=4$, then $A(n)=A(\rad(n)) \leq p(p-1)(pq-1)$ too.

\begin{thm} If $m=p^aq^br^cs^d$, then
$$\mathrm{Cond}(V_{\Phi_n})\leq 4 \phi(\rad(n))^4 m^2.$$
\end{thm}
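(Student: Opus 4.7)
My plan follows the strategy of \cite{blanco1} for the cases $\omega(n)\leq 3$, upgraded to $\omega(n)=4$ by plugging in the new ingredients: the coefficient bound $A(n)\leq p(p-1)(pq-1)$ recalled at the start of the subsection, and Lemma~\ref{dens} for the case $k=4$.

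First, I would decompose $\mathrm{Cond}(V_{\Phi_n})=\|V_{\Phi_n}\|\cdot\|V_{\Phi_n}^{-1}\|$. Since $V_{\Phi_n}$ has entries that are primitive $n$-th roots of unity (all of unit modulus), its $m\times m$ Frobenius norm equals $m$. Hence the task reduces to showing $\|V_{\Phi_n}^{-1}\|\leq 4\phi(\mathrm{rad}(n))^4 m$. Using the crude sum-of-squares bound $\|V_{\Phi_n}^{-1}\|^2=\sum_{i,j}|w_{ij}|^2\leq m^2\max_{i,j}|w_{ij}|^2$, it suffices to prove the entrywise estimate $\max_{i,j}|w_{ij}|\leq 4\phi(\mathrm{rad}(n))^4$.

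Second, I would apply Proposition~\ref{prop: wij} to get
$$|w_{ij}|\leq \mathrm{rad}(n)\,\frac{A(n)+1}{|\Phi'_{\mathrm{rad}(n)}(\zeta_j^{n/\mathrm{rad}(n)})|}.$$
The denominator is handled by invoking Lemma~\ref{dens} with $k=4$ applied to the squarefree integer $\mathrm{rad}(n)=pqrs$ (the proof of that lemma treats precisely this case and yields $|\Phi'_{\mathrm{rad}(n)}(\zeta)|^{-1}\leq 2\phi(\mathrm{rad}(n))$ on any primitive $\mathrm{rad}(n)$-th root of unity $\zeta$, which is what $\zeta_j^{n/\mathrm{rad}(n)}$ is). For $A(n)$, the identity $\Phi_n(x)=\Phi_{\mathrm{rad}(n)}(x^{n/\mathrm{rad}(n)})$ from Proposition~\ref{cycloprop} shows $A(n)=A(\mathrm{rad}(n))$, and the recalled bound gives $A(n)\leq p(p-1)(pq-1)$.

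Substituting these into the entrywise estimate yields
$$|w_{ij}|\leq 2\,\mathrm{rad}(n)\,\bigl(A(n)+1\bigr)\,\phi(\mathrm{rad}(n)).$$
The final step is the elementary inequality
$$pqrs\cdot\bigl(p(p-1)(pq-1)+1\bigr)\leq 2(p-1)^3(q-1)^3(r-1)^3(s-1)^3,$$
which, once established, gives $|w_{ij}|\leq 4\phi(\mathrm{rad}(n))^4$ and thence the theorem.

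The main obstacle is verifying the above elementary inequality uniformly across all admissible prime tuples $p<q<r<s$ with the tight constant $4$. The degree on the right (twelve) dominates the degree on the left (seven), so asymptotically there is enormous slack, but the worst cases arise when the smallest prime $p=2$ (since $p/(p-1)$ is then maximal). I would dispatch this by checking the single boundary tuple $(p,q,r,s)=(2,3,5,7)$ explicitly, where direct computation gives $5040\ll 221184$, and then observing that replacing any prime by the next one multiplies the right-hand side by at least a factor equal to the cube of the successive $(p_i-1)$-ratio, which dominates the corresponding left-hand-side increase. All other estimates are routine consequences of the cited lemmas.
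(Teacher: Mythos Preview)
Your proposal is correct and follows essentially the same route as the paper: bound $\|V_{\Phi_n}\|=m$, invoke Proposition~\ref{prop: wij} and Lemma~\ref{dens} (for the squarefree radical) to get $|w_{ij}|\leq 2\,\rad(n)\,(A(n)+1)\,\phi(\rad(n))$, and finish with an elementary inequality in the four primes. The only difference is cosmetic: the paper splits your final inequality into two cleaner pieces, first $p(p-1)(pq-1)\leq \phi(\rad(n))$ (i.e.\ $p(pq-1)\leq (q-1)(r-1)(s-1)$, immediate from $p<q<r<s$) and then the standing fact $\rad(n)\leq 2\phi(\rad(n))^2$, which avoids the ad~hoc boundary-plus-monotonicity check you sketch; incidentally, your boundary value should read $2310$, not $5040$.
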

	
\begin{proof}
First, we use Prop. \ref{prop: wij} to obtain
$$
|w_{ij}|\leq pqrs\frac{A(pqrs)+1}{|\Phi_{pqrs}'(\zeta_{pqrs})|}.
$$
Now, using Lemma \ref{dens}, we get

\begin{align*}
\begin{split}
|w_{ij}|\leq \;\; &2p^2qrs\phi(\rad(n))(p-1)(pq-1) \\
\leq \;\; & 2\rad(n)\phi(\rad(n))^2 \\
\leq \;\; & 4\phi(\rad(n))^4.
\end{split}
\end{align*}

Thus, we have $||V_{\Phi_n}^{-1}||\leq 4 m\phi(\rad(n))^4$ and the result follows.
\end{proof}

\subsection{Case $k=5$} In this case, the result is as follows:

\begin{thm} Let $n=p^aq^br^cs^dt^e$ and  $m = \phi(n)$ and $p<q<r<s<t.$ Then
$$\mathrm{Cond}(V_{\Phi_n} ) \leq 4\phi(\rad(n))^7m^2.$$
\end{thm}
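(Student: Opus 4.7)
I would follow the same template used for the $k=4$ case: combine Proposition~\ref{prop: wij}, Lemma~\ref{dens}, and Theorem~\ref{Theorem1} to bound $|w_{ij}|$, and then conclude via $\rad(n)\leq 2\phi(\rad(n))^2$. Since $\|V_{\Phi_n}\|=m$ and $\|V_{\Phi_n}^{-1}\|\leq m\,\max_{i,j}|w_{ij}|$, it suffices to show $\max_{i,j}|w_{ij}|\leq 4\phi(\rad(n))^7$, which yields $\mathrm{Cond}(V_{\Phi_n})\leq 4\phi(\rad(n))^7 m^2$ at once.

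First, Proposition~\ref{prop: wij} gives
\[
|w_{ij}|\leq \rad(n)\cdot\frac{A(n)+1}{|\Phi'_{\rad(n)}(\zeta_j^{n/\rad(n)})|}.
\]
Next, since $\zeta_j^{n/\rad(n)}$ is a primitive $\rad(n)$-th root of unity with $\omega(\rad(n))=5$, Lemma~\ref{dens} applied with $k=5$ yields $|\Phi'_{\rad(n)}(\zeta_j^{n/\rad(n)})|^{-1}\leq 2\phi(\rad(n))^2$. Finally, Proposition~\ref{cycloprop} gives $A(n)=A(\rad(n))$, so the $k=5$ case of Theorem~\ref{Theorem1} provides $A(n)\leq \tfrac{135}{512}p^7 q^3 r$. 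Stringing these three ingredients together,
\[
|w_{ij}|\leq 2\,\rad(n)\,\phi(\rad(n))^2\,(A(n)+1).
\]

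The crucial remaining step is the elementary arithmetic inequality $A(n)+1\leq \phi(\rad(n))^3$, which together with $\rad(n)\leq 2\phi(\rad(n))^2$ immediately produces $|w_{ij}|\leq 4\phi(\rad(n))^7$. To establish it, I would use $(p-1)\geq p/2$ to get $\phi(\rad(n))^3\geq (pqrst)^3/2^{15}$, thereby reducing the claim to
\[
8640\, p^4 \;\leq\; r^2 s^3 t^3,
\]
which is immediate since $p<q<r<s<t$ are distinct primes: the tightest case $p=2,\,r=5,\,s=7,\,t=11$ is checked by direct substitution ($8640\cdot 16=138240$ versus $25\cdot 343\cdot 1331>10^7$), while for large $p$ one has $r,s,t\geq p+O(1)$, so $r^2 s^3 t^3\gtrsim p^8$ dominates $p^4$ by a wide margin.

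The main obstacle I anticipate is precisely this elementary inequality among consecutive primes. Numerically it holds with enormous slack, but making it fully uniform in $p,q,r,s,t$ requires a small case split handling the smallest prime tuples before the asymptotic estimate kicks in. Apart from that, the argument is a direct upgrade of the $k=4$ proof: the exponent of $\phi(\rad(n))$ is boosted from $4$ to $7$ because Lemma~\ref{dens} now contributes $\phi(\rad(n))^{2}$ instead of $\phi(\rad(n))^{1}$, and the stronger upper bound on $A(n)$ absorbs one further $\phi(\rad(n))^3$ factor.
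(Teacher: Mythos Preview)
Your proposal is correct and follows the same template as the paper: Proposition~\ref{prop: wij} together with Lemma~\ref{dens} for $k=5$ give $|w_{ij}|\leq 2\,\rad(n)\,\phi(\rad(n))^2(A(\rad(n))+1)$, then Theorem~\ref{Theorem1} bounds $A(\rad(n))$, and one finishes with $\rad(n)\leq 2\phi(\rad(n))^2$.

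The only difference is in how the elementary inequality $A(\rad(n))+1\leq \phi(\rad(n))^3$ is verified. You pass through $(p-1)\geq p/2$, pick up a factor $2^{15}$, and are left with a numeric inequality that needs a small-prime case split. The paper instead uses the shifting trick $p\leq q-1$, $q\leq r-1$, $r\leq s-1$ to obtain directly
\[
p^7q^3r \;\leq\; (q-1)^5(r-1)^3(s-1)^3 \;\leq\; \phi(\rad(n))^3,
\]
the last step because $(q-1)^2\leq (p-1)^3(t-1)^3$ holds for all admissible primes. This avoids the case analysis you anticipated as the main obstacle; otherwise the two arguments are identical.
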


\begin{proof}
We have, as in the previous case, by using Proposition \ref{prop: wij} and Lemma \ref{dens}:
$$
|w_{ij}|\leq 2 \rad(n)\phi(\rad(n))^2(A(\rad(n))+1).
$$
Now, by using Theorem \ref{Theorem1}, we obtain

\begin{align*}
\begin{split}
|w_{ij}|\leq \;\; & 2 \rad(n)\phi(\rad(n))^2p^7q^3r \\
\leq \;\; & 2\rad(n)\phi(\rad(n))^2(q-1)^5(r-1)^3(s-1)^3,
\end{split}
\end{align*}

which is upper bounded by $2\rad(n)\phi(\rad(n))^5$. Again, since $\rad(n)\leq 2\phi(\rad(n))^2$ the result follows.
\end{proof}

\subsection{Case $k=6$} Finally, for $\omega(n)=6$ we have:
\begin{thm}
 Let $n =p^aq^br^cs^dt^eu^f$ and  $m = \phi(n)$ and $p<q<r<s<t<u.$ Then
 $$\mathrm{Cond}(V_{\Phi_n} ) \leq 4 m^2\phi(\rad(n))^{11}.$$
\end{thm}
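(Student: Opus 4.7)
My plan is to mimic exactly the template used for the cases $\omega(n)=4$ and $\omega(n)=5$. Since $\|V_{\Phi_n}\|=m$ and $\|V_{\Phi_n}^{-1}\|\leq m\cdot\max_{i,j}|w_{ij}|$, it suffices to establish the uniform bound $|w_{ij}|\leq 4\phi(\rad(n))^{11}$. First, combining Proposition~\ref{prop: wij} with Lemma~\ref{dens} applied (with $k=6$) to the cyclotomic polynomial of $\rad(n)$ yields
$$|w_{ij}|\leq 2\,\rad(n)\,\phi(\rad(n))^3\,\bigl(A(\rad(n))+1\bigr).$$

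Next, I would invoke the six-prime bound of Theorem~\ref{Theorem1}, which supplies $A(\rad(n))\leq \frac{18225\,p^{15}q^7r^3s}{262144}$. Because $\tfrac{18225}{262144}<1$ and $p^{15}q^7r^3s\geq 2^{15}\cdot 3^7\cdot 5^3\cdot 7$ already dwarfs the additive $+1$, I may absorb both the constant and the $+1$ and simply write $A(\rad(n))+1\leq p^{15}q^7r^3s$. Combined with the standard inequality $\rad(n)\leq 2\phi(\rad(n))^2$, this reduces the estimate to
$$|w_{ij}|\leq 4\,\phi(\rad(n))^5\cdot p^{15}q^7r^3s.$$

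The decisive ingredient---and the step that requires the most bookkeeping---is the purely elementary prime inequality
$$p^{15}q^7r^3s\leq\phi(\rad(n))^6=(p-1)^6(q-1)^6(r-1)^6(s-1)^6(t-1)^6(u-1)^6.$$
My plan for proving it is to exploit the strict ordering $p<q<r<s<t<u$, which gives the chain $p\leq q-1\leq r-1\leq s-1\leq t-1\leq u-1$, in order to ``shift'' each factor on the left to some larger factor on the right, subject to the constraint that no $(p_i-1)$ may appear with exponent exceeding $6$. Concretely, the total exponent sum on the left is $15+7+3+1=26$, whereas the five shiftable columns $(q-1),(r-1),(s-1),(t-1),(u-1)$ admit a total capacity of $5\cdot 6=30$, leaving a slack of $4$. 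A direct bookkeeping (for instance, bound $p^{15}\leq (q-1)^6(r-1)^6(s-1)^3$, then $q^7\leq (s-1)^3(t-1)^4$, then $r^3\leq (t-1)^2(u-1)$, and $s\leq (u-1)$) shows the distribution is feasible, and the trivial slack factors $(p-1)^6,(u-1)^4\geq 1$ close the gap to $\phi(\rad(n))^6$. This combinatorial step is where I expect the main obstacle to lie: no new analytic idea is needed, but the tight exponent budget must be managed carefully.

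Putting the three estimates together gives $|w_{ij}|\leq 4\phi(\rad(n))^{11}$, hence $\|V_{\Phi_n}^{-1}\|\leq 4m\phi(\rad(n))^{11}$, and consequently $\mathrm{Cond}(V_{\Phi_n})\leq 4m^2\phi(\rad(n))^{11}$, as claimed.
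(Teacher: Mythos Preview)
Your proposal is correct and follows essentially the same route as the paper: combine Proposition~\ref{prop: wij} with Lemma~\ref{dens} (for $k=6$), invoke the six-prime bound of Theorem~\ref{Theorem1}, and finish with the prime-shifting inequality together with $\rad(n)\leq 2\phi(\rad(n))^2$. The only cosmetic differences are that the paper applies the $\rad(n)\leq 2\phi(\rad(n))^2$ step last rather than before the shifting, and it uses the simpler one-step shift $p^{15}q^7r^3s\leq (q-1)^{15}(r-1)^7(s-1)^3(t-1)$ (each prime replaced by the next prime minus one) before absorbing into $\phi(\rad(n))^6$, whereas you distribute the exponents more carefully from the outset; both bookkeeping choices lead to the same bound $|w_{ij}|\leq 4\phi(\rad(n))^{11}$.
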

\begin{proof} 
As in the previous cases, Proposition \ref{prop: wij} and Lemma \ref{dens} give
$$
|w_{ij}|\leq 2 \rad(n)\phi(\rad(n))^3(A(\rad(n))+1).
$$
Now, by using Theorem \ref{Theorem1}, we obtain

\begin{align*}
\begin{split}
|w_{ij}|\leq \;\; & 2 \rad(n)\phi(\rad(n))^3p^{15}q^7r^3s \\
\leq \;\; & 2 \rad(n)\phi(\rad(n))^3(q-1)^{15}(r-1)^7(s-1)^3(t-1),
\end{split}
\end{align*}
which can be upper bounded as 

\begin{align*}
\begin{split}
|w_{ij}|\leq \;\; & 2\phi(\rad(n))^9\rad(n) \\
\leq \;\; & 4\phi(\rad(n))^{11},
\end{split}
\end{align*}

hence
$$
\mathrm{Cond}(V_{\Phi_n})\leq 4\phi(\rad(n))^{11}m^2.
$$
\end{proof}

Notice that if $n=2^am$ with $a\neq 1$, then $n\leq \phi(n)^2$ and the upper bounds for the condition number will be $2\phi(\rad(n))^km^2$ with $k=4,7,11$.

\end{document}